\providecommand{\algorithmname}{Algorithm}
\theoremstyle{plain}
\theoremstyle{plain}
\theoremstyle{plain}
\theoremstyle{plain}
\newtheorem{thm}{\protect\theoremname}
\author{
\IEEEauthorblockN{Wenjie Liu}
\IEEEauthorblockA{
\textit{Networked Systems Security Group}\\
\textit{KTH Royal Institute of Technology}\\
Stockholm, Sweden}
wenjieli@kth.se
\and
\IEEEauthorblockN{Panos Papadimitratos}
\IEEEauthorblockA{
\textit{Networked Systems Security Group}\\
\textit{KTH Royal Institute of Technology}\\
Stockholm, Sweden}
papadim@kth.se
\thanks{This work was supported in part by the SSF SURPRISE cybersecurity project, the Security Link strategic research center, and the China Scholarship Council.}
}
\newcommand{\newac}{\newacronym}
\newcommand{\ac}{\gls}
\newcommand{\Ac}{\Gls}
\newcommand{\acpl}{\glspl}
\newcommand{\Acpl}{\Glspl}
\providecommand{\corollaryname}{Corollary}
\providecommand{\lemmaname}{Lemma}
\providecommand{\propositionname}{Proposition}
\providecommand{\theoremname}{Theorem}
\newcommand{\revadd}[1]{\textcolor{.}{#1}}
\definecolor{mycolor1}{rgb}{0.494117647058824,0.184313725490196,0.556862745098039}
\definecolor{mycolor2}{rgb}{0.466666666666667,0.674509803921569,0.188235294117647}
\definecolor{mycolor3}{rgb}{0.301960784313725,0.745098039215686,0.933333333333333}
\definecolor{mycolor4}{rgb}{0.929411764705882,0.694117647058824,0.125490196078431}
\definecolor{mycolor5}{rgb}{0.635294117647059,0.078431372549020,0.184313725490196}
\definecolor{mycolor6}{rgb}{0.8500,0.3250,0.0980}
\begin{document}
\title{Probabilistic detection of GNSS spoofing using opportunistic information}

\maketitle


\setlength\parskip{0pt}

\begin{abstract}
\revadd{\Acpl{gnss}} are integrated into many devices. However, civilian \ac{gnss} signals are usually not cryptographically protected. This makes attacks that forge signals relatively easy. Considering modern devices often have network connections and on-board sensors, the proposed here Probabilistic Detection of \ac{gnss} Spoofing (PDS) scheme is based on such opportunistic information. PDS has at its core two parts. First, a regression problem with motion model constraints, which equalizes the noise of all locations considering the motion model of the device. Second, a Gaussian process, that analyzes statistical properties of location data to construct uncertainty. Then, a likelihood function, that fuses the two parts, as a basis for a \ac{npl}-based detection strategy. Our experimental evaluation shows a performance gain over the state-of-the-art, in terms of attack detection effectiveness. 
\end{abstract}

\begin{IEEEkeywords}
Secure localization, GNSS spoofing detection, opportunistic information
\end{IEEEkeywords}

\glsresetall

\section{Introduction}
\revadd{\Acpl{gnss}} are threatened by a broad gamut of attacks, notably spoofing that allows an adversary to control the position and time information obtained by \ac{gnss} receivers. Attacks were extensively observed recently, e.g., \ac{gps} interference reported by HawkEye 360 \cite{Wer:J22}, attacks on sensor fusion algorithms for autonomous vehicles \cite{SheWonCheChe:C20}, and earlier incidents of luxury yachts being misnavigated \cite{PsiHumSta:J16}. Overall, \ac{gnss} spoofers are getting cheaper and more sophisticated. In response, numerous approaches are proposed to prevent and detect attacks, ranging from the introduction of cryptographic protection, e.g., \cite{WesRotHum:J12,AndCarDevGil:C17}, to signal level mechanisms, e.g., \cite{PapJov:C08,ZhaTuhPap:C15}.

In anticipation of the deployment of such countermeasures and potential residual vulnerabilities, the challenge is how to detect attacks against current \ac{gnss} receivers. A key observation is that \ac{gnss} receivers are typically parts or building blocks of (mobile) computing platforms, which are networked and have integrated \acpl{imu} \cite{SheWonCheChe:C20,KasKhaAbdLee:J22, OliSciIbrDip:J22}. As a result, they have what one can term \emph{opportunistic information} to cross-validate the \ac{gnss} data; notably, information beyond \ac{gnss} that happens to be available, that is, with the help of network interfaces (Wi-Fi, cellular networks, etc.) and on-board sensors (\ac{imu}, wheel speed sensors, etc.). Without considering security, there is already a significant volume of work using opportunistic information for localization. For example, wireless networking infrastructures can provide for alternative positioning methods \cite{Sho:J13,XiaLiuLiChe:J17,LaoMorKimLee:J18,LiuChe:J21}, and the inferred motion based on on-board sensors can be juxtaposed to the \ac{gnss} \ac{pvt} \cite{DixBobKruJac:C20}. Network-based methods often have much larger errors than \ac{gnss} and most on-board sensors (e.g., commercial \ac{imu}) for mobile platforms have a significant cumulative error \cite{XueNiuHonLi:C20}. In contrast, networked-based positioning does not have cumulative errors while on-board inertial sensors do not have large noise fluctuations. 

With opportunistic information readily available, \ac{gnss} spoofing attack detection mechanisms can be designed. In \cite{OliSciIbrDip:C19}, \ac{gnss} devices with cellular connections perform \ac{wcl} using measured \ac{rss} data. The distance between the estimated cellular location and \ac{gnss} enables a binary decision (on \ac{gnss} being attacked). This is extended in \cite{OliSciIbrDip:J22} by adding Wi-Fi and different communication situations, without on-board sensors. They use a simple and direct way to make decisions on spoofing (i.e., Euclidean distance). \Ac{imu}-based detection \cite{CecForLauTom:J21} requires the device to perform maximum likelihood estimation to get position, acceleration, and velocity. Then, combined with orientation into a trajectory vector, a \ac{glrt} can detect spoofing.

Extending the attack detection beyond a binary outcome and basing it upon all available opportunistic information, not only network-based or on-board sensor-based, is straightforward. To the best of our knowledge, it remains unexplored and it is the focus of this work. We consider all existing opportunistic information (networks and on-board sensors) to provide a likelihood of \ac{gnss} under spoofing. We combine positioning based on available terrestrial networks (Wi-Fi, cellular, etc.) and on-board inertial sensors to implement a robust and efficient Probabilistic Detection of \ac{gnss} Spoofing (PDS) scheme. The key idea is to consider the different kinds of noisy measurements and the continuity of the motion model of the \ac{gnss} receiver, and then build a probability space of positions and aggregate it in a weighted manner into one \emph{likelihood function}. It is important to note that we assume that opportunistic information is noisy but not subject to attacks, consistent with the literature \cite{KasKhaAbdLee:J22, OliSciIbrDip:J22}); we discuss future work towards extending the adversary model. 

The likelihood function construction has two parts. First, we derive a closed-form solution for the relation between motion and position information and propose a motion-constrained regression problem \cite{Fan:b96}; this balances short-term estimation through the receiver movement and long-term estimation through network signals, so the location data is smoothed based on motion model constraints. Second, a Gaussian process regression \cite{SchSpeKra:J18} models the uncertainty of the smoothed locations. Finally, we combine both parts (i.e., motion and statistics) into one test statistic by using a weighted sum, similar to the \ac{glrt} \cite{RotCheLoWal:J21}, and make decisions based on it. The detection we propose here can detect any type of spoofing, e.g., gradual deviation, replay attack, etc., because it operates based on the resultant \ac{gnss}-provided (possibly attacked) position. 

Our main contributions are: a multi-sided opportunistic information-based \ac{gnss} spoofing detection method using on-board inertial sensors and locations from terrestrial network infrastructures. We fuse the heterogeneous data using model-based (motion-constrained local polynomial regression) and model-free (Gaussian process) methods. We handle the one-time and cumulative observational errors through a constrained optimization problem that considers motion and network-based locations jointly. The likelihood function used in decision-making incorporates weights and integrates information from both temporal and categorical perspectives. We conduct simulation-based experiments building on three datasets to evaluate the performance advantage in terms of spoofing detection true positive rate, attack detection delay (the time between the attack launch and detection), and the accuracy of the positioning even when under a \ac{gnss} attack. 

The rest of the paper is organized as follows. \revadd{Sec.~\ref{relwor} mentions work related to positioning and \ac{gnss} spoofing.} Sec.~\ref{sysmod} and \ref{prosta} introduce our system model, attack model, assumptions, and problem formulation. Sec.~\ref{schout} is an overview of the probabilistic \ac{gnss} spoofing detection using opportunistic information. Sec.~\ref{winrol}, Sec.~\ref{concon}, and Sec.~\ref{croseq} develop the algorithms and introduce the theoretical results for three components, window rolling, confidence intervals, and decision-making. \revadd{Numerical results and performance comparison with related work are discussed and presented in Sec.~\ref{numres}.} Finally, the conclusion is drawn in Sec.~\ref{conclu}.


\section{Background and Related Work}
\label{relwor} 
\subsection{Terrestrial Positioning}
Terestrial infrastructures, e.g., Wi-Fi, cellular, and eLoran \cite{Gow:J20}, can play a role as \ac{gnss} backup. Terrestrial positioning usually uses matrix completion \cite{XiaLiuLiChe:J17}, fingerprinting, and range-based methods \cite{LaoMorKimLee:J18}. Matrix-completion-based localization does not depend on range measurements, which may be incomplete and corrupted. Noise and multi-norm regularization is used for matrix completion, such that the matrix is unimodal and the position of its peak is the location estimation. Fingerprint-based methods collect a database of fingerprinting in advance, including signal strengths, magnetic field, channel state information, and so on. Then, deterministic or probabilistic fingerprinting matching algorithms are used for localization. Range-based methods use signal strength, propagation time, or angle of arrival to derive pseudoranges, for multilateration. 
\subsection{GNSS Spoofing Attacks}
A \ac{gnss} spoofer typically generates a counterfeit \ac{gnss} signal with the right power and format, according to the specifications. Before carrying out a spoofing attack, the attacker can use jamming to intentionally interfere with \ac{gnss} signals, forcing the victim \ac{gnss} receiver to loose the signal lock \cite{KasKhaAbdLee:J22} or, at the expense of complexity, it can gradually lift the adversarial signal and make eventually the victim track it \cite{SheWonCheChe:C20}. The simplest way of generating adversarial signals is meaconing, the retransmission of legitimate \ac{gnss} signals with a time delay. A more advanced variation of meaconing, selective delay, can rebroadcast individual satellite signals \cite{PapJov:C08}, then modify the position solution according to the attack scenario. Replay/relay attacks can be mounted with low-complexity setups \cite{LenSpaPap:C22}. 
Distance-decreasing (DD) attacks \cite{ZhaPap:C19b} can give more options to the adversary, using Early Detection and Late Commit to relay the signal, resulting in the forged signal seemingly arriving earlier than the actual signal would have arrived. The challenge lies in that replay/relay attacks can be effective even if there is cryptographic protection \cite{PapJov:C08,ZhaPap:C19b,LenSpaPap:C22}.
\subsection{GNSS Spoofing Detection}
Traditional methods of spoofing detection dig into the characteristics of signals doing anomaly detection with time series, such as Doppler effect, \ac{rss}, \ac{snr}, and \ac{aoa} \cite{PapJov:C08,BroJafDehNie:C12,ZhaPap:C19}. At the same time, considering \ac{gnss} receivers as part of mobile computing platforms, integrated network interfaces and on-board sensors \cite{KasKhaAbdLee:J22} can provide for information to detect attacks. Terrestrial network infrastructures, already considered as parts of a backup positioning solution, can provide location information that can be used to assess the \ac{gnss}-provided position \cite{OliSciIbrDip:J22}. 
In \cite{KhaRosLanCha:C14}, the authors proposed schemes using \ac{imu} to cross-check the \ac{gnss} location. They use a Kalman filter to fuse \ac{gnss} states and \ac{imu} measurements, while \ac{raim} performs the spoofing detection. In \cite{MicForCenTom:J22}, a fused location is derived from \ac{gnss} information with \ac{imu}, and compared the relative distance information from \ac{rss} data to detect spoofing. Their alternative location can further help the robust navigation under spoofing. 

\section{System Model and Adversary}
\label{sysmod}
\subsection{System Model}
\revadd{We consider a mobile \ac{gnss} platform (e.g., smartphone, car, and done) equipped with common modules providing opportunistic information, including network interfaces (e.g., Wi-Fi and cellular networks) and on-board sensors (e.g., \ac{imu} and speed sensors).} Some network modules may be unavailable, interfered with by benign signals or perform with an unacceptably high network latency. \ac{gnss}, Wi-Fi, and cellular networks provide updated locations based on existing positioning algorithms \cite{MagLunPat:J15,ShaKas:J21}. \Acpl{imu} provide multi-axis acceleration measurements. In some situations, e.g., in vehicles, we can get speed from the wheel speed sensors. The opportunistic information is not specially designed for spoofing detection, and it is updated as a discrete time series at a specific frequency. When the mobile platform in Fig.~\ref{fig:locations} moves and navigates on a path in a benign environment, the \ac{gnss}-provided location should be consistent with the opportunistic information. Similarly, under a \ac{gnss} attack, the \ac{gnss}-provided location would typically deviate from the actual location and be inconsistent with the opportunistic information. 

\textbf{Notation.} $\mathbf{p}_{\text{c}}(t) \in \mathbb{R}^2$ is the receiver actual location at time $t$, which needs to be determined based on positioning. \Ac{gnss} and network-based positions are $\mathbf{p}_m(t)$. $\mathbf{p}_0(t)$ is the \ac{gnss} position at time $t$. $\mathbf{p}_m(t),m=1,2,...,M$ are locations from networks, where $M$ is the number of network interfaces; $t=1,2,...,N$ in second and $N$ is the last time index. $\mathbf{p}_m(t)$ are also with an unavailability probability $U_m$. Regarding on-board sensors: they provide speed, $\mathbf{v}(t)$, acceleration, $\mathbf{a}(t)$, and angular rate, $\boldsymbol{\omega}(t)$. We assume that positioning errors based on benign settings for \ac{gnss}, Wi-Fi, and cellular networks are Gaussian random variables. 
\begin{figure}
\begin{centering}
\includegraphics[width=\columnwidth]{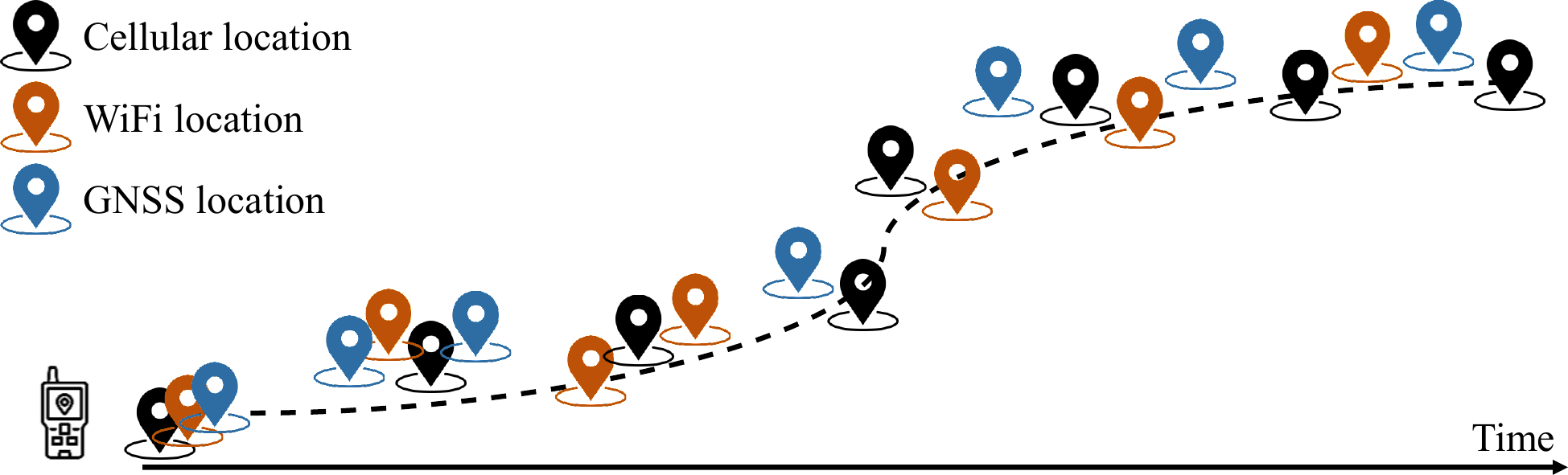}
\par\end{centering}
\caption{A two-dimensional example of location information from the existing infrastructures.}
\label{fig:locations}
\end{figure}
\subsection{Adversary}
Fig.~\ref{fig:advers} provides a high-level illustration of the adversary model: spoofed or replayed/relayed \ac{gnss} signals force the mobile platform wrongly estimate its position. We are agnostic to the attack specifics but we do not constrain the attacker. We assume it knows the victim location with almost practically no observational error and it can use a \acpl{sdr} with state-of-the-art \ac{gnss} spoofing capabilities to falsify the position. 
The attacker can delicately design the trajectory of the victim (spoofed \ac{gnss}) location, concerning the real path of the mobile platform. Some trajectory strategies, such as gradual deviation \cite{SheWonCheChe:C20} and path drift \cite{NarRanNou:C19}, can be imperceptible for a period after the onset of the attack. 
The attack can be either in \emph{cold start} or force the victim to loose lock on legitimate signals and lock on adversarial signals. 
\begin{figure}
\begin{centering}
\includegraphics[width=\columnwidth]{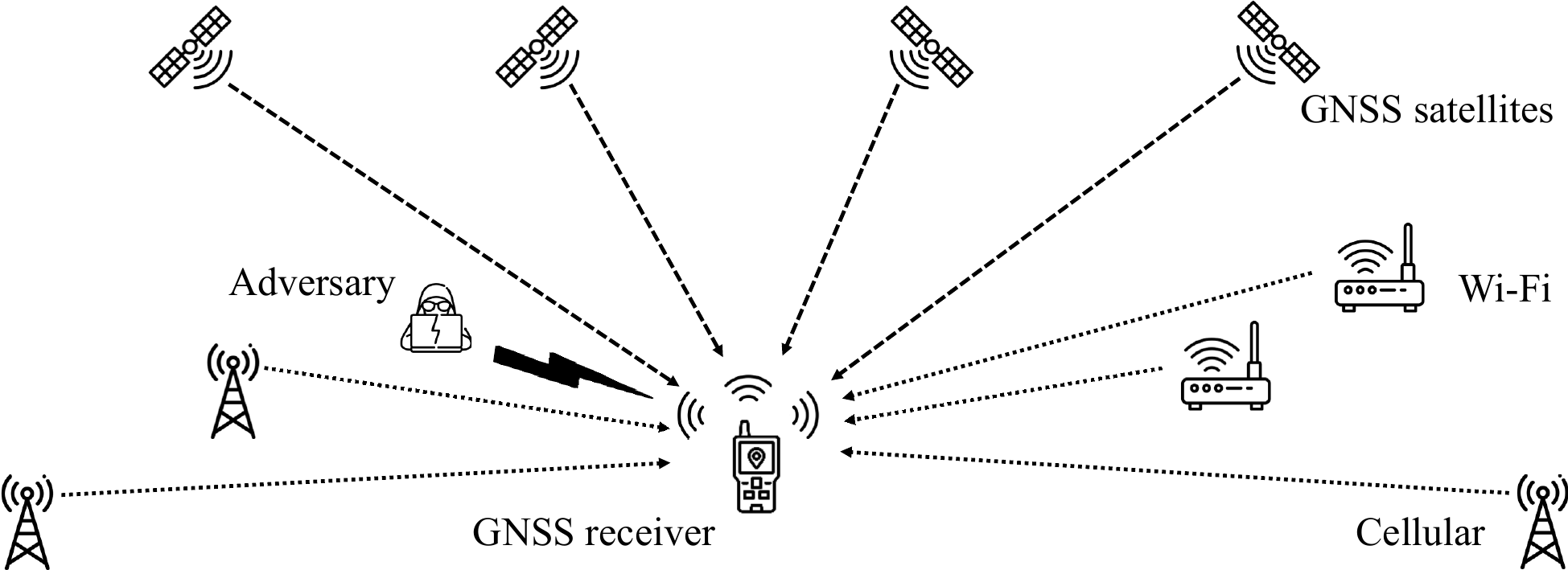}
\par\end{centering}
\caption{System and adversary model illustration.}
\label{fig:advers}
\end{figure}

We assume the attacker operates only in the \ac{gnss} domain but does not attack other networks and thus does not affect the resultant positioning. It is assumed in the context of this work that the adversary can only jam wireless networks. This can prolong the periods of unavailability for alternative positioning. 
In addition, we assume the attacker does not physically control the victim, thus the procedure of deriving location information from different network interfaces and on-board sensors can not be manipulated. 

\section{Problem Statement}
\label{prosta}
We aim to test if the \ac{gnss} provided position is consistent with opportunistic information, and accordingly make a decision on whether the current \ac{gnss} position is the result of an attack. Based on the likelihood of \ac{gnss} being under attack, we want maximize the accuracy of \ac{gnss} spoofing detection. We also aim for a system operating even if some type, not all types, of opportunistic information is unavailable. We focus on the design of the detection algorithm. 

For \ac{gnss} spoofing detection at a time $t$, we have data $\{\mathbf{p}_m(i),{\mathbf{v}}(i),{\mathbf{a}}(i),\boldsymbol{\omega}(i) \}$ for all $0<i<t$ and $m \in \{0,1,...,M\}$, based on which it is decided if $\mathbf{p}_0(t)$ is the result of a \ac{gnss} attack. The hypotheses are: 
\begin{itemize}
    \item $H_0$: \ac{gnss} is not under attack.
    \item $H_1$: \ac{gnss} is under attack.
\end{itemize}
The decision at time $t$ is $\hat{H}(t)\in \{H_0,H_1\}$. The true positive for $t$ can be written as $\mathbb{I}\{\hat{H}(t)=H_1|H_1\}=1$ and the false positive as $\mathbb{I}\{\hat{H}(t)=H_1|H_0\}=1$, where the indicator function $\mathbb{I}\{A|B\}$ takes value 1 if A holds on condition of B. We denote the total number of positives as $N_\text{P}$, the number of true positives as $N_\text{TP}$, and the number of false positives as $N_\text{FP}$. The true positive probability for the period $0<t \le N$ is: 
\begin{equation}
    P_\text{TP} (\hat{H}(t)) = P(\hat{H}(t)=H_1|H_1)=\frac{N_\text{TP}}{N_\text{P}}.
\end{equation}
The Type I error (false positive) probability is: 
\begin{equation}
    P_\text{FP} (\hat{H}(t)) = P(\hat{H}(t)=H_1|H_0)=\frac{N_\text{FP}}{N-N_\text{P}}.
\end{equation}

We define the detection time delay $\Delta T$ as the difference between the time of raising alarm and attack being launched: 
\begin{eqnarray*}
\Delta T&=&\min\left\{ t \Bigm|\mathbb{I}\{\hat{H}(t)=H_{1}|H_{1}\}=1\right\} \\&&-\min\left\{ t \Bigm|\mathbb{I}\{\hat{H}(t)=H_{0}|H_{1}\}=1\right\} 
\end{eqnarray*}

The goal of this work is to: (a) maximize the true positive probability $P_\text{TP}$, given a maximum allowable false positive probability, $P_{\text{FP}_{\max}}$:
$$
\begin{matrix}
	\max&P_\text{TP} (\hat{H}(t))\\
	\text{s.t}.&P_\text{FP} (\hat{H}(t)) \le P_{\text{FP}_{\max}}\\
\end{matrix}.
$$
and (b) minimize the detection time delay $\Delta T$:
$$
\begin{matrix}
	\min&\Delta T (\hat{H}(t))\\
	\text{s.t}.&P_\text{FP} (\hat{H}(t)) \le P_{\text{FP}_{\max}}\\
\end{matrix}.
$$
and (c) provide a likelihood in $[0,1]$ of \ac{gnss} being under attack along with (d) an alternative position.

\section{Proposed Scheme}
We start with the outline of the proposed PDS scheme, then introduce three system components and three application cases, i.e., use of: (i) network interfaces only, (ii) on-board sensors only, and (iii) network interfaces and on-board sensors.
\subsection{Scheme Outline}
\label{schout}
\begin{figure}
\begin{centering}
\includegraphics[width=\columnwidth]{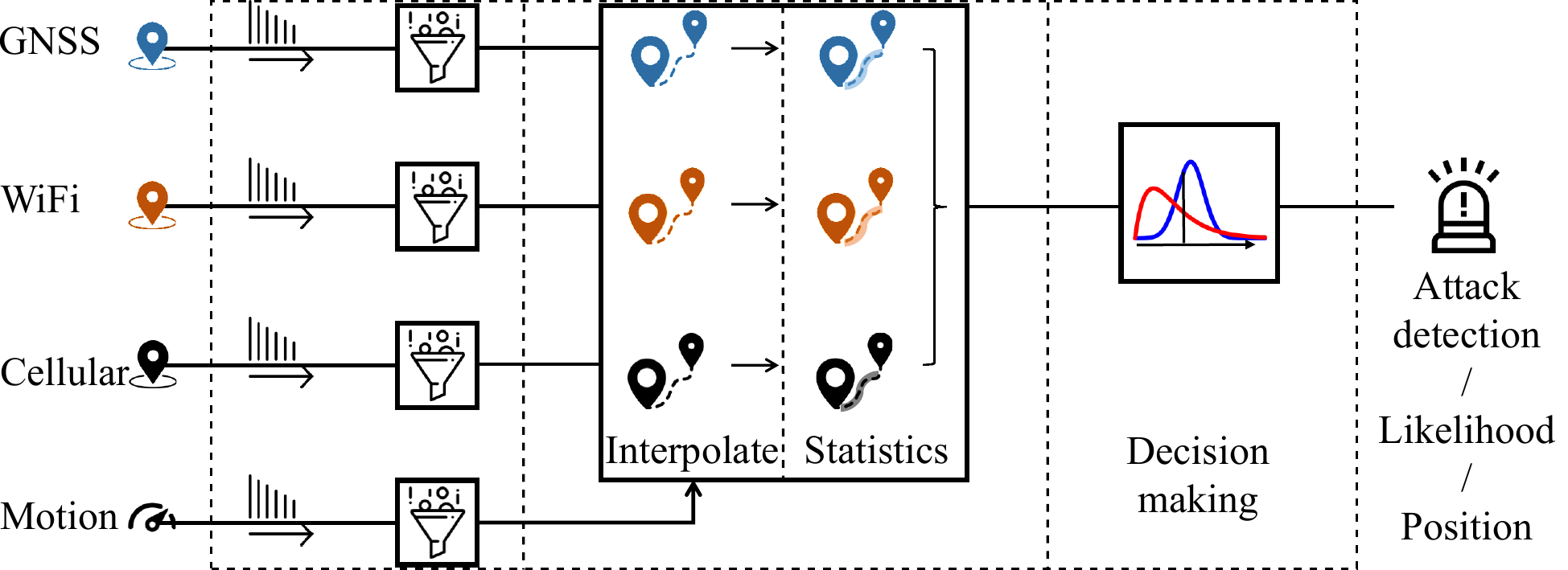}
\par\end{centering}
\caption{System overview of PDS.}
\label{fig:scheme}
\end{figure}
PDS aims at detecting \ac{gnss} spoofing through multiple information sources of opportunity that can provide locations, speed, and acceleration. We provide a high-level description of how the \ac{gnss} spoofing detection system works, illustrated in Fig.~\ref{fig:scheme}. The system collects input data from \ac{gnss} and opportunistic information sources. The filters perform window rolling to screen $\mathbf{p}_m(i)$ and, with the help of $\{{\mathbf{v}}(i),{\mathbf{a}}(i),\boldsymbol{\omega}(i) \}$, we interpolate the fixed-length series of positions to a continuous function. Then, \revadd{we model the confidence intervals of the three series of positions.} Through the fused confidence intervals, the decision module constructs the likelihood and allows us to decide if the current \ac{gnss} position is the result of an attack and if so raise an alarm. 

PDS mainly relies on three components: window rolling, confidence interval construction, and a likelihood function that is fused from time and category perspectives. These three components are essential in processing the location series, establishing the confidence intervals for the locations, and making decisions based on the likelihood function. 

The first component collects positions from \ac{gnss}, Wi-Fi, and cellular in real-time ($M$ alternative sources, here we deal with $M=2$), as well as speed and acceleration from the \ac{imu}. The data points are sorted and arranged into time series by timestamp. Then, the filters perform window rolling to extract fixed-size sequences at each time instead of using the whole data. The window size is chosen by cross-validation, achieving the best performance on the previous empirical data. 

The second component estimates the confidence intervals based on the series from the motion and statistical models. 
\begin{itemize}
\item The motion model leverages short-term and long-term characteristics for estimation. $\mathbf{v}(t)$, $\mathbf{a}(t)$, $\boldsymbol{\omega}(t)$, are always updated at a high frequency and with almost no latency compared to obtaining locations from network infrastructures. However, they are short-term accurate and \acpl{imu} cannot directly provide locations. Moreover, the integral of these raw data that can be used to indirectly calculate the current location is not accurate in the long term. The mobile platform can be localized based on the terrestrial networks, in principle at a lower accuracy and frequency compared to \ac{gnss}. On the one hand, their one-time positioning error is relatively high, without, an accumulated error as that for \ac{imu}. Such positioning can be performed periodically, though not highly accurately, and we propose to combine it with short-term \ac{imu} accuracy. 
\item The statistical model is concerned with confidence intervals in the form of a probability distribution representing the uncertainty of \ac{gnss} spoofing with respect to location, with the mean value being the output of the motion model estimations. The Gaussian process model uses a predefined covariance function and estimates the variance of the probability distribution. 
\end{itemize}
The motion model uses a local polynomial regression algorithm to fit positions at time $t$, while it satisfies the constraint of the movement. The fitting algorithm assigns weights to different data points (\revadd{the more recent the data, the higher the weight}). Then, it fits the positions and minimizes the fitting error. At the same time, the movement constraint ensures the fitted result satisfies the physical feasibility of speed, acceleration, and attitude. Based on the statistical model, we analyze locations, where we make the assumption that the location series follows a Gaussian process with a mean derived from local polynomial regression. To calculate the variance, we utilize Gaussian process regression and employ the differences between the fitted results and input data points. This helps us to obtain a more precise measure of the variance associated with the location data. 

The third component constructs a likelihood function using the confidence intervals in the form of probability distribution with mean and variance, and decision-making relies on calculating the likelihood ratio from the time domain of the intervals and across different sources. In the time domain, we combine the confidence intervals by computing a weighted sum over time. The weights used in the computation of the sum are normalized to ensure that they add up to one. The resulting distribution is also Gaussian, with its mean and variance expressed as a linear combination of the original means and variances of the individual distributions. We also combine the $M$ alternate position sources and the \ac{gnss}-obtained one, by multiplying $M+1$ distributions at time $t$. Then, we construct a likelihood function based on the combined distribution and use the \ac{npl} to get a threshold for maximizing the true positive when fixing the false positive, $P_{\text{FP}_{\max}}$, of the \ac{gnss} spoofing detection. 
\subsection{Rolling for Screening and Detection}
\label{winrol}
We use a simultaneous screening and detection strategy in the algorithm. Window rolling allows one to choose a fixed-length sequence, $S$, instead of using the entire time series for every estimation at each time slot $t$, for efficiency. In addition, outdated data is not helpful. Hence, we use a window rolling strategy with a window size to filter it out. 
\subsubsection{Data Collection}
We collect $\mathbf{p}_m(t) \in \mathbb{R}^2, m=0,1,...,M$, in the format of World Geodetic System 1984 (WGS84)\footnote{WGS 84, the coordinate reference system used by \ac{gps}, is an oblate spheroid surface centered at the center of the earth with the parameters equatorial ($a$), poles ($b$), and inverse flattening ($1/f$).} data. $\mathbf{v}(t),\mathbf{a}(t) \in \mathbb{R}^3$ use a right-forward-up coordinate system\footnote{Right-forward-up coordinate system uses right, forward, and up of the mobile platform as Cartesian coordinates.}. $\boldsymbol{\omega}(t) \in \mathbb{R}^3$ consists of roll $\phi$, pitch $\theta$, and yaw $\psi$, which are angles between local coordinates and WGS84, calculated by a gyroscope and a magnetometer. 
\subsubsection{Window Size}
This $w$ is a model parameter, the length of $S$, the most recent benign sequence: $S=\{ \mathbf{p}_m(i),{\mathbf{v}}(i),{\mathbf{a}}(i),\boldsymbol{\omega}(t) \}$ for $t-w<i<t$. There are various methods to determine the optimal rolling window size, including using cross-validation to minimize the \ac{mse}, and they still need to be explored in our future work. Once a suitable window size $w$ is chosen, we can proceed to process the sequence $S$.
\subsubsection{Processing $S$}
During the initialization, benign data series $S$ is the initial input. The dimension of $S$ is $w \times (M+5)$, including timestamp, $M+1$ location sources, speed, acceleration, and angular rate. Then, for each time $t$, we check the current \ac{gnss} position and decide whether it is under attack. If the likelihood of being under attack is higher than a threshold, the system raises an alarm and updates $S$ by using $\{  \mathbf{p}_m(t),{\mathbf{v}}(t),{\mathbf{a}}(t),\boldsymbol{\omega}(t) \}, m \in \{1,2,...,M\}$ without \ac{gnss}. If not under attack, we update $S$ by using information from all sources, $\{  \mathbf{p}_m(t),{\mathbf{v}}(t),{\mathbf{a}}(t),\boldsymbol{\omega}(t) \}, m \in \{0,1,...,M\}$. The overall process is shown as Algorithm \ref{alg:winrol}.
\begin{algorithm}
\hspace*{\algorithmicindent} \textbf{Input} $t,\mathbf{p}_0(t),\mathbf{p}_1(t),\mathbf{p}_2(t),{\mathbf{v}}(t),{\mathbf{a}}(t),\boldsymbol{\omega}(t)$\\
\hspace*{\algorithmicindent} \textbf{Parameter} $w$\\
\hspace*{\algorithmicindent} \textbf{Output} \textit{IsAttack}
\begin{algorithmic}[1]
\State \textbf{initialize} $S$ as a matrix of input data
\While{$t \leq N$}
\State $t \gets t+1$
\State \textbf{ensure} $length(S) \leq w$
\State \textit{CI}=\textit{ConstructCI}($\mathbf{p}_0(t), S$) \Comment{Algorithm \ref{alg:confiden}}
\State \textit{IsAttack}=\textit{MakeDecision}(\textit{CI}) \Comment{Algorithm \ref{alg:decision}}
\If{\textit{IsAttack}} 
    \State $S \gets S+\{ \mathbf{p}_m(i),{\mathbf{v}}(i),{\mathbf{a}}(i),\boldsymbol{\omega}(t) \},m=1,2,...$
\Else
    \State $S \gets S+\{ \mathbf{p}_m(i),{\mathbf{v}}(i),{\mathbf{a}}(i),\boldsymbol{\omega}(t) \},m=0,1,2,...$
\EndIf 
\EndWhile
\end{algorithmic}
\caption{Window rolling for screening and detection \label{alg:winrol}}
\end{algorithm}

\subsection{Constructing the Confidence Interval}
\label{concon}
First, locations are smoothed by a motion-constrained regression problem \cite{Fan:b96} as illustrated in Fig.~\ref{fig:interpolation}. Second, a Gaussian process \cite{SchSpeKra:J18} models the confidence intervals as Fig.~\ref{fig:gp_part}. This is summarized in Algorithm \ref{alg:confiden}. 
\begin{algorithm}
\hspace*{\algorithmicindent} \textbf{Input} $S$\\
\hspace*{\algorithmicindent} \textbf{Output} \textit{CI}
\begin{algorithmic}[1]
\State $i \gets 0$
\While{$i \leq w$}
    \State $i \gets i+1$
    \State $\hat{\mathbf{p}}_m(t-w+i) \gets$ Eq.~\eqref{eq:proall} \Comment{Motion part}
    \State $\mathbf{x}_{t-w+i} \gets \hat{\mathbf{p}}_m(t-w+i)-\mathbf{p}_m(t-w+i)$
    \State \Comment{Statistics part}
\EndWhile
\State \textit{CI} $\gets$ Eq.~\eqref{eq:conint}
\end{algorithmic}
\caption{Construct the spoofing confidence interval \label{alg:confiden}}
\end{algorithm}

\subsubsection{Motion Model}
We use local polynomial regression to interpolate and estimate the position for continuous time, based on discrete $\mathbf{p}_m(t)$ location points. The local polynomial regression is an attractive method of non-parametric regression and fits Taylor expansion of an unknown function at a point by a weighted least squares regression. So, for time $t$, with polynomial order $n$, the estimator $\hat{\mathbf{p}}_m(t)$ for $\mathbf{p}_m(t)$ is 
\[
\hat{\mathbf{p}}_m(t) = \mathbf{W}\mathbf{t}
\]
where $\mathbf{W} \in \mathbb{R} ^{2 \times (n+1)}$ is a matrix of polynomial coefficients, $\mathbf{t}$ is a $(n+1)$ dimensional vector and $[\mathbf{t}]_i=t^{i-1}$. 

\textbf{Case 1.} With network-based positions only, our optimization problem for estimating $\hat{\mathbf{p}}_m$  for any $m>0$ at time $t'$ is
\[\begin{array}{*{20}{c}}
  {\mathop {\min }\limits_{\mathbf{W}} }&{\sum\limits_{t=t'-w}^{t'-1} [\mathbf{W} \mathbf{t}-\mathbf{p}_m(t)]^\top K_\text{loc}(t-t')[\mathbf{W} \mathbf{t}-\mathbf{p}_m(t)]} 
\end{array}\]
where $K_\text{loc}$ is a kernel function assigning weights. 

\textbf{Case 2.} For on-board sensor data, the coordinate systems need to be unified. $\mathbf{R}$ is the rotation matrix that transforms the local right-forward-up coordinate system to WGS84 coordinates:
\begin{align*}
\mathbf{R}(t)	&=	\mathbf{R}_{\psi}(t)\mathbf{R}_{\theta}(t)\mathbf{R}_{\phi}(t) \\
	&=	\left[\begin{array}{ccc}
\cos\psi(t) & -\sin\psi(t) & 0\\
\sin\psi(t) & \cos\psi(t) & 0
\end{array}\right] \\
& \qquad \times \left[\begin{array}{ccc}
\cos\theta(t) & 0 & \sin\theta(t)\\
0 & 1 & 0\\
-\sin\theta(t) & 0 & \cos\theta(t)
\end{array}\right] \\
& \qquad \qquad \times \left[\begin{array}{ccc}
1 & 0 & 0\\
0 & \cos\phi(t) & -\sin\phi(t)\\
0 & \sin\phi(t) & \cos\phi(t)
\end{array}\right].
\end{align*}
We denote the state of the mobile platform as $\big(\mathbf{p}_m(t),\mathbf{v}(t),\mathbf{a}(t) \big)$, so
\begin{align*}
\tilde{\mathbf{p}}_m(t)&=\mathbf{p}_m(t-\Delta t)+\mathbf{R}(t-\Delta t)\mathbf{v}(t-\Delta t)\Delta t\\
&\qquad+\frac{1}{2}\mathbf{R}(t-\Delta t)\mathbf{a}(t-\Delta t)(\Delta t)^2\\
\tilde{\mathbf{v}}(t)&=\mathbf{R}(t-\Delta t)\mathbf{v}(t-\Delta t)\Delta t\\
&\qquad+\mathbf{R}(t-\Delta t)\mathbf{a}(t-\Delta t)\Delta t
\end{align*}
Then the state-transition model is
\begin{equation}
    \mathbf {F}(t) ={\begin{bmatrix}\mathbf{1}&\mathbf{R}(t) \Delta t\\\mathbf{0}&\mathbf{1}\end{bmatrix}}.
\end{equation}
The control-input model is 
\begin{equation}
    \mathbf {B}(t) ={\begin{bmatrix}{\mathbf{R}(t)\Delta t^2}/{2}\\\mathbf{R}(t) \Delta t\end{bmatrix}}.
\end{equation}
We conclude that $\big(\tilde{\mathbf{p}}_m(t),\tilde{\mathbf{v}}(t)\big)=\mathbf {F}(t-\Delta t) \cdot \big(\mathbf{p}_m(t-\Delta t),\mathbf{v}(t-\Delta t)\big)+\mathbf{B}(t-\Delta t)\mathbf{a}(t-\Delta t)+\mathbf {n} $, where $\mathbf {n}$ models unknown effects. 

Then, our optimization problem for estimating a continuous $\hat{\mathbf{p}}_0$ at time $t'$ is
\begin{equation}
    \begin{array}{*{20}{c}}
    {\mathop {\min }\limits_{\mathbf{W}} }&{\sum\limits_{t=t'-w}^{t'-1} [\mathbf{W} \mathbf{t}-\mathbf{p}_0(t)]^\top K_\text{loc}(t-t')[\mathbf{W} \mathbf{t}-\mathbf{p}_0(t)]} \\ 
    {{\text{s}}{\text{.t}}{\text{.}}}&{|\mathbf{W} \mathbf{t'} - \tilde{\mathbf{p}}_0(t' )| \le \Delta t \cdot \boldsymbol{\epsilon}}
    \end{array}
\label{eq:promot}
\end{equation}
where $\Delta t \cdot \boldsymbol{\epsilon} \in \mathbb{R}^2$ in the constraint is a noise tolerance term based on the length of the time slot, in order to satisfy the requirement that both long-term and short-term (on-board sensors) anti-spoofing should be considered in the design. In addition, if the on-board sensor does not provide speed, $\mathbf{v}(t)$ is replaced by $\intop_{0}^{t}\mathbf{a}(t)\textrm{d}t$ or other approximations. Similarly, if the on-board sensor does not provide acceleration, $\mathbf{a}(t)$ is set to zero, which means the movement is seen as uniform motion in a short period. 

\textbf{Case 3.} For the combination of networks-based positioning and on-borad sensors, the optimization problem is 

\begin{equation}
    \begin{array}{*{20}{c}}
    \mathcal{P}: \\
    {\mathop {\min }\limits_{\mathbf{W}} }&{\sum\limits_{t=t'-w}^{t'-1} [\mathbf{W} \mathbf{t}-\mathbf{p}_m(t)]^\top K_\text{loc}(t-t')[\mathbf{W} \mathbf{t}-\mathbf{p}_m(t)]} \\ 
    {{\text{s}}{\text{.t}}{\text{.}}}&{|\mathbf{W} \mathbf{t'} - \tilde{\mathbf{p}}_m(t' )| \le \Delta t \cdot \boldsymbol{\epsilon}}
    \end{array}
\label{eq:proall}
\end{equation}
that $\forall m$. After the solution of $\mathcal{P}$ is presented, we can calculate the deterministic estimation of position at time $t'$. 
\begin{thm}
The estimator $\hat{\mathbf{p}}_m(t)$ can estimate $\mathbf{p}_m(t)$ in polynomial time, i.e., the problem $\mathcal{P}$ is a polynomial time problem.
\end{thm}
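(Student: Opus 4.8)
The plan is to recognize $\mathcal{P}$ as a convex quadratic program with linear inequality constraints and then invoke the classical fact that such programs are solvable in polynomial time; in fact I would argue a sharper statement, namely that $\mathcal{P}$ admits a closed-form, active-set solution, which also matches the ``closed-form solution'' claimed earlier in the paper.

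First I would vectorize the decision variable. Writing $\mathbf{w}=\mathrm{vec}(\mathbf{W})\in\mathbb{R}^{2(n+1)}$ and noting that $\mathbf{W}\mathbf{t}=(\mathbf{t}^\top\otimes\mathbf{I}_2)\mathbf{w}=:\mathbf{A}_t\mathbf{w}$, every residual $\mathbf{W}\mathbf{t}-\mathbf{p}_m(t)$ is affine in $\mathbf{w}$. The objective of $\mathcal{P}$ then becomes $\mathbf{w}^\top\mathbf{Q}\mathbf{w}-2\mathbf{c}^\top\mathbf{w}+\mathrm{const}$, with $\mathbf{Q}=\sum_{t=t'-w}^{t'-1}K_\text{loc}(t-t')\,\mathbf{A}_t^\top\mathbf{A}_t$ and $\mathbf{c}=\sum_{t=t'-w}^{t'-1}K_\text{loc}(t-t')\,\mathbf{A}_t^\top\mathbf{p}_m(t)$. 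Since the kernel weights $K_\text{loc}(\cdot)$ are nonnegative, $\mathbf{Q}\succeq 0$, so the objective is convex (and strictly convex once $w\ge n+1$ with distinct time stamps, by a Vandermonde argument). Read componentwise, the constraint $|\mathbf{W}\mathbf{t'}-\tilde{\mathbf{p}}_m(t')|\le\Delta t\cdot\boldsymbol{\epsilon}$ is the pair of linear inequalities $-\Delta t\,\epsilon_j\le[\mathbf{A}_{t'}\mathbf{w}]_j-[\tilde{\mathbf{p}}_m(t')]_j\le\Delta t\,\epsilon_j$ for $j=1,2$, i.e. $O(M)$ linear inequalities in total over all sources. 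Hence $\mathcal{P}$ is a convex QP of dimension $2(n+1)$ with $O(M)$ linear constraints, and it is always feasible: since $\mathbf{t'}\neq\mathbf{0}$, the map $\mathbf{W}\mapsto\mathbf{W}\mathbf{t'}$ is onto $\mathbb{R}^2$.

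Next I would invoke polynomial-time solvability. Convex quadratic programming with linear constraints is solvable in time polynomial in the bit size of the data (Kozlov--Tarasov--Khachiyan via the ellipsoid method; more efficiently by interior-point methods). The input to $\mathcal{P}$ consists of the $w$ data points, the order $n$, the tolerances, and the $M{+}1$ sources, so its size is polynomial in $(w,n,M)$ and the bit length of the measurements; therefore an optimal $\mathbf{W}^\star$ (to any prescribed accuracy) and with it $\hat{\mathbf{p}}_m(t)=\mathbf{W}^\star\mathbf{t}$ is computed in polynomial time. For a self-contained and sharper argument I would instead write the KKT system directly: the box constraint involves only the two scalar quantities $[\mathbf{A}_{t'}\mathbf{w}]_j$, each of which is inactive, at its lower bound, or at its upper bound, giving at most $3^2$ active-set patterns; for each pattern the minimizer of the convex quadratic subject to the corresponding linear equalities is obtained in closed form by solving one linear system of size $O(n)$, and one then keeps the feasible candidate of least objective value. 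This yields an explicit algorithm running in $O\big((n+1)^3+w(n+1)^2\big)$ arithmetic operations, which is polynomial (indeed strongly polynomial in the arithmetic model).

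The main obstacle is not depth but care with the modeling details: (i) confirming that the kernel weights $K_\text{loc}$ are nonnegative so that $\mathbf{Q}\succeq 0$ and the QP is genuinely convex, since otherwise polynomial-time solvability is not guaranteed; (ii) fixing precisely what ``$\forall m$'' means (a separate QP per source versus one stacked QP with shared $\mathbf{W}$) and checking that neither reading pushes the constraint count past $O(M)$; and (iii) stating unambiguously which notion of ``polynomial time'' is intended (bit complexity versus arithmetic complexity, exact versus $\varepsilon$-approximate solution). Degenerate cases, such as repeated time stamps leaving $\mathbf{Q}$ merely PSD, or a $\boldsymbol{\epsilon}$ making the constraint tight, are handled routinely once the active-set enumeration above is in place.
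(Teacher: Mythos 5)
Your proposal is correct and follows essentially the same route as the paper: show the objective is a convex quadratic in $\mathbf{W}$ (the paper does this via the second derivative being positive definite since $K_\text{loc}>0$), observe the absolute-value constraint splits into two affine inequalities, and conclude that $\mathcal{P}$ is a convex program solvable in polynomial time. Your version is in fact more careful on the final step --- the paper simply asserts solvability ``using Lagrange multipliers,'' whereas you explicitly invoke polynomial-time convex QP results (ellipsoid/interior-point) and give an active-set closed-form alternative --- but this is a refinement of the same argument, not a different one.
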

\begin{proof}
We take the second derivative of the objective function of $\mathcal{P}$ with respect to $\mathbf{W}$: 
\[
2 \sum\limits_{t=t'-w}^{t'-1}  K_\text{loc}(t-t')\cdot (\mathbf{t'}\cdot \mathbf{t'}^\top )^\top \otimes \mathbb{I}
\]
which is a positive definite matrix, as $K_\text{loc}(t-t')>0$ always holds. Hence, the objective function is convex. The constraints in \eqref{eq:promot} and \eqref{eq:proall} are equivalent to
$$
\left\{ \begin{array}{l}
	\mathbf{W} \mathbf{t'} - \tilde{\mathbf{p}}_m(t' ) \le \Delta t\cdot \boldsymbol{\epsilon }\\
	\mathbf{W} \mathbf{t'} - \tilde{\mathbf{p}}_m(t' ) \ge -\Delta t\cdot \boldsymbol{\epsilon }\\
\end{array} \right. ,\forall t
$$
which are affine functions. Hence, $\mathcal{P}$ is a convex optimization problem. \revadd{It is solvable using Lagrange multipliers, thus} the estimator $\hat{\mathbf{p}}_m(t)$ can estimate $\mathbf{p}_m(t)$ in polynomial time. 
\end{proof}

With an optimization problem $\mathcal{P}$ (Eq.~\eqref{eq:proall}) solvable in polynomial time, the prediction can be real-time in practice. By recursively updating the state estimate, the location sequence interpolation result $\hat{\mathbf{p}}_m(t)$ should be similar to the solid lines shown in Fig.~\ref{fig:interpolation} that show the interpolated locations, which is the estimated trace of the mobile platform. 
\begin{figure}
\begin{centering}
\includegraphics[trim={0 1cm 0 0},clip,width=\columnwidth]{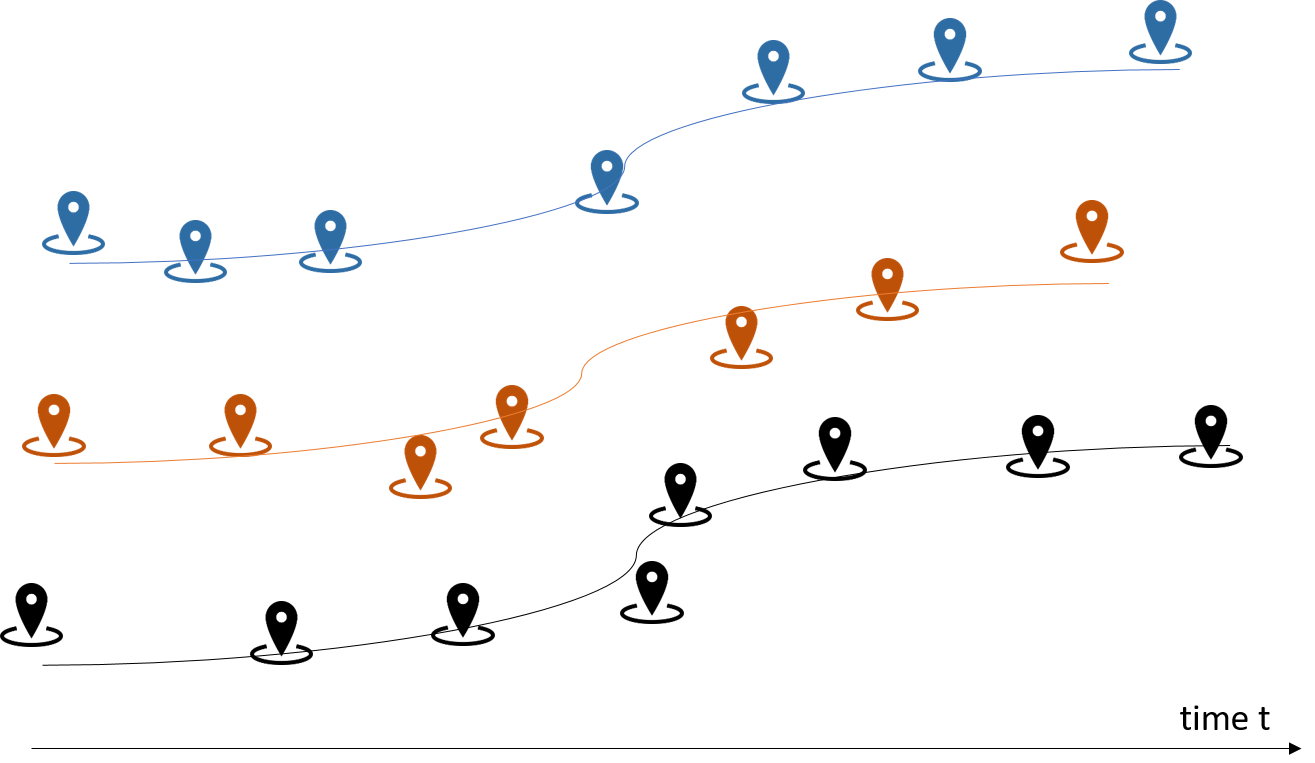}
\par\end{centering}
\caption{Estimated trace from local polynomial regression (tiled view).}
\label{fig:interpolation}
\end{figure}

\subsubsection{Statistical Model}
We use the Gaussian process on the trace from the previous step to model the residual part of estimated positions. Suppose the residual part at time $t$ is:
\begin{equation}
    \mathbf{x}_m(t)=\hat{\mathbf{p}}_m(t)-\mathbf{p}_m(t).
\end{equation}
Then, $\left\{\mathbf{x}_m(i);i\in (0,t)\right\}$ are zero-mean Gaussian random variables. The covariance function $K(\mathbf{x},\mathbf{x}')=\frac{1}{2}\mathbb{E}[(\mathbf{x}-\mathbf{x}')^2]$ will be chosen to describe its interrelation \cite{SchSpeKra:J18}. Then, a linear unbiased estimator can predict the current $\mathbf{x}_m(t)$: 
\begin{equation}
    \hat{\mathbf{x}}_m(t)=\sum_{i=t-w}^{t-1} \lambda_i \mathbf{x}_m(i)
\end{equation}
where $\sum_{i=t-w}^{t-1} \lambda_i=1$. Gaussian process regression determines $\lambda_i$ that minimizes the variance of the estimation error, 
\begin{equation}
    \begin{array}{*{20}{c}}
  {\mathop {\min }\limits_{\mathbf{\boldsymbol{\lambda}}} }&{\mathbb{V}[\hat{\mathbf{x}}_m(t)-\mathbf{x}_m(t)]} \\ 
  {\text{s.t.}}&{\sum_{i=t-w}^{t-1} \lambda_i=1} 
\end{array}
\end{equation}
which can be solved with the method of Lagrange multipliers. 
\begin{thm}
\revadd{Given a} pre-trained covariance function $K(\mathbf{x},\mathbf{x}')$, the linear unbiased estimator of Gaussian process regression can estimate $\mathbf{p}_m(t)$ in polynomial time.
\end{thm}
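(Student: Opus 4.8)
The plan is to reuse the template of the preceding theorem: recast the Gaussian process regression problem as a convex quadratic program with a single affine equality constraint, then show that its Lagrange/\ac{kkt} conditions amount to one linear system whose solution by a matrix inversion costs polynomial time in the window size $w$ (and $w$ is itself bounded by the time index $t$).

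First I would write the objective explicitly. Because $\{\mathbf{x}_m(i)\}$ are zero-mean, the error $\hat{\mathbf{x}}_m(t)-\mathbf{x}_m(t)=\sum_{i=t-w}^{t-1}\lambda_i\mathbf{x}_m(i)-\mathbf{x}_m(t)$ is itself zero-mean, so $\mathbb{V}[\hat{\mathbf{x}}_m(t)-\mathbf{x}_m(t)]=\boldsymbol{\lambda}^\top\mathbf{C}\boldsymbol{\lambda}-2\boldsymbol{\lambda}^\top\mathbf{c}+c_0$, where $[\mathbf{C}]_{ij}=\mathbb{E}[\mathbf{x}_m(i)\mathbf{x}_m(j)]$, $[\mathbf{c}]_i=\mathbb{E}[\mathbf{x}_m(i)\mathbf{x}_m(t)]$, and $c_0=\mathbb{E}[\mathbf{x}_m(t)^2]$; under the stationarity implicit in using a single covariance function, each of these entries is recovered from the pre-trained $K$ via the identity $K(\mathbf{x},\mathbf{x}')=\tfrac12\mathbb{E}[(\mathbf{x}-\mathbf{x}')^2]$ (so $[\mathbf{C}]_{ij}=\sigma^2-K_{ij}$ with $\sigma^2$ the common variance). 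The two position coordinates are treated separately, or jointly with the Kronecker factor $\mathbf{C}\otimes\mathbb{I}_2$ exactly as in the previous proof; this changes nothing.

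Next I would observe that $\mathbf{C}$, being a Gram matrix of a valid covariance function, is \ac{psd}, so the Hessian $2\mathbf{C}$ makes the objective convex, while the constraint $\sum_{i=t-w}^{t-1}\lambda_i=\mathbf{1}^\top\boldsymbol{\lambda}=1$ is affine; hence the problem is a convex quadratic program. (Equivalently, written through the semivariogram matrix $\mathbf{K}$ the objective is convex on the constraint hyperplane since $\mathbf{K}$ is conditionally negative semidefinite, the standard ordinary-kriging situation.) Forming the Lagrangian $\mathcal{L}(\boldsymbol{\lambda},\nu)=\boldsymbol{\lambda}^\top\mathbf{C}\boldsymbol{\lambda}-2\boldsymbol{\lambda}^\top\mathbf{c}+c_0+\nu(\mathbf{1}^\top\boldsymbol{\lambda}-1)$ and imposing $\nabla_{\boldsymbol{\lambda}}\mathcal{L}=\mathbf{0}$, $\partial_\nu\mathcal{L}=0$ yields the bordered linear system $\begin{bmatrix}2\mathbf{C}&\mathbf{1}\\\mathbf{1}^\top&0\end{bmatrix}\begin{bmatrix}\boldsymbol{\lambda}\\\nu\end{bmatrix}=\begin{bmatrix}2\mathbf{c}\\1\end{bmatrix}$ of dimension $w+1$, solvable by Gaussian elimination in $O(w^3)$ time, i.e.\ polynomial. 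Finally, since the refined position estimate combines $\hat{\mathbf{p}}_m(t)$ from the preceding theorem (polynomial time) with $\hat{\mathbf{x}}_m(t)$ obtained here, $\mathbf{p}_m(t)$ is estimated in polynomial time.

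The step I expect to be the main obstacle is justifying that the bordered system is actually solvable, i.e.\ that the $(w+1)\times(w+1)$ matrix above is nonsingular: this can fail if $\mathbf{C}$ is rank-deficient on the window (repeated time indices, degenerate kernel). I would handle it either by invoking the standard condition that the kernel is strictly conditionally positive definite, which makes the bordered matrix invertible, or by adding a small nugget/regularization term $\delta\mathbb{I}$ to $\mathbf{C}$, precisely what the empirical residuals (the ``differences between the fitted results and the input data points'') provide, which keeps the program convex and the system strictly positive definite on the constraint set without affecting the $O(w^3)$ bound.
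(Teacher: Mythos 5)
Your proposal follows essentially the same route as the paper: minimize the estimation-error variance subject to $\sum_i\lambda_i=1$ via Lagrange multipliers, reduce the stationarity conditions to a linear (kriging) system in $\boldsymbol{\lambda}$ and the multiplier, and solve it by Gaussian elimination in $\mathcal{O}(w^3)$; the paper merely expands the variance in variogram form $\mathbb{E}[(\mathbf{x}_m(i)-\mathbf{x}_m(j))^2]$ rather than your covariance-matrix form, which is an equivalent bookkeeping choice. Your added remarks on convexity and on nonsingularity of the bordered system (strictly conditionally positive definite kernel or a nugget term) are sound elaborations of points the paper leaves implicit, not a different argument.
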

\begin{proof}
Ordinary Gaussian process regression uses a linear unbiased estimator for $\mathbf{x}_m(t)$. We can use Lagrange multipliers to extract the $\lambda_i$ parameters from the optimization problem. 
\begin{align*}
L(\boldsymbol{\lambda},\mu)&=\mathbb{V}[\hat{\mathbf{x}}_m(t)-\mathbf{x}_m(t)] +\mu(\sum_{i=t-w}^{t-1} \lambda_i-1)\\
&=\mathbb{E}[\sum_{i=t-w}^{t-1} \lambda_i \mathbf{x}_m(i)-\mathbf{x}_m(t)]^{2}+\mu(\sum_{i=t-w}^{t-1}\lambda_{i}-1)\\
&=\sum_{i=t-w}^{t-1}\lambda_{i}\mathbb{E}[\mathbf{x}_m(i)-\mathbf{x}_m(t)]^{2}\\
&\;-\frac{1}{2}\sum_{i,j}\lambda_{i}\lambda_{j}\mathbb{E}[\mathbf{x}_m(i)-\mathbf{x}_m(j)]^2+\mu(\sum_{i=t-w}^{t-1}\lambda_{i}-1)
\end{align*}
where $\mathbb{E}[\mathbf{x}_m(i)-\mathbf{x}_m(t)]^{2}$ and $\mathbb{E}[\mathbf{x}_m(i)-\mathbf{x}_m(j)]^2$ are calculated from the pre-trained covariance function $K(\mathbf{x},\mathbf{x}')$. Then, we take partial derivatives of $L(\boldsymbol{\lambda},\mu)$ and set them to 0:
\begin{align}
\frac{\partial L(\boldsymbol{\lambda},\mu)}{\partial \boldsymbol{\lambda}}=0\\
\frac{\partial L(\boldsymbol{\lambda},\mu)}{\partial \mu}=0
\end{align}
obtaining a system of linear equations. There exist several algorithms for solving it, such as Gaussian elimination with $\mathcal{O}(w^3)$ computation complexity.
\end{proof}

As the prediction is a distribution for each $t$, we have the confidence intervals $\mathcal{I}_m(t)$, showing the probability of suffering a \ac{gnss} spoofing attack. As \ac{gnss} and network locations are with observational noises, the interval follows a Gaussian function at each time $t$ for each source. Then, its mean $\hat{\mathbf{p}}_m(t), m \in \{0,1,...,M\}$ and standard deviation $\boldsymbol{\sigma}_m (t)$ are sufficient to depict the interval as
\begin{equation}
    \mathcal{I}_m(t) \sim \mathcal{N} (\hat{\mathbf{p}}_m(t), \boldsymbol {\Sigma }_m (t)),m=0,1,...,M
\label{eq:conint}
\end{equation}
where $\boldsymbol {\Sigma }_m (t) = \operatorname {diag} ([\boldsymbol {\sigma }_m (t)]^{\circ 2})$ is a two-by-two diagonal matrix and $\circ 2$ is the \emph{Hadamard power}. The result of an example is demonstrated as Fig.~\ref{fig:gp_part}. The lines through the location points are the estimated means and the color bands are the estimated variances. 
\begin{figure}
\begin{centering}
\includegraphics[width=\columnwidth]{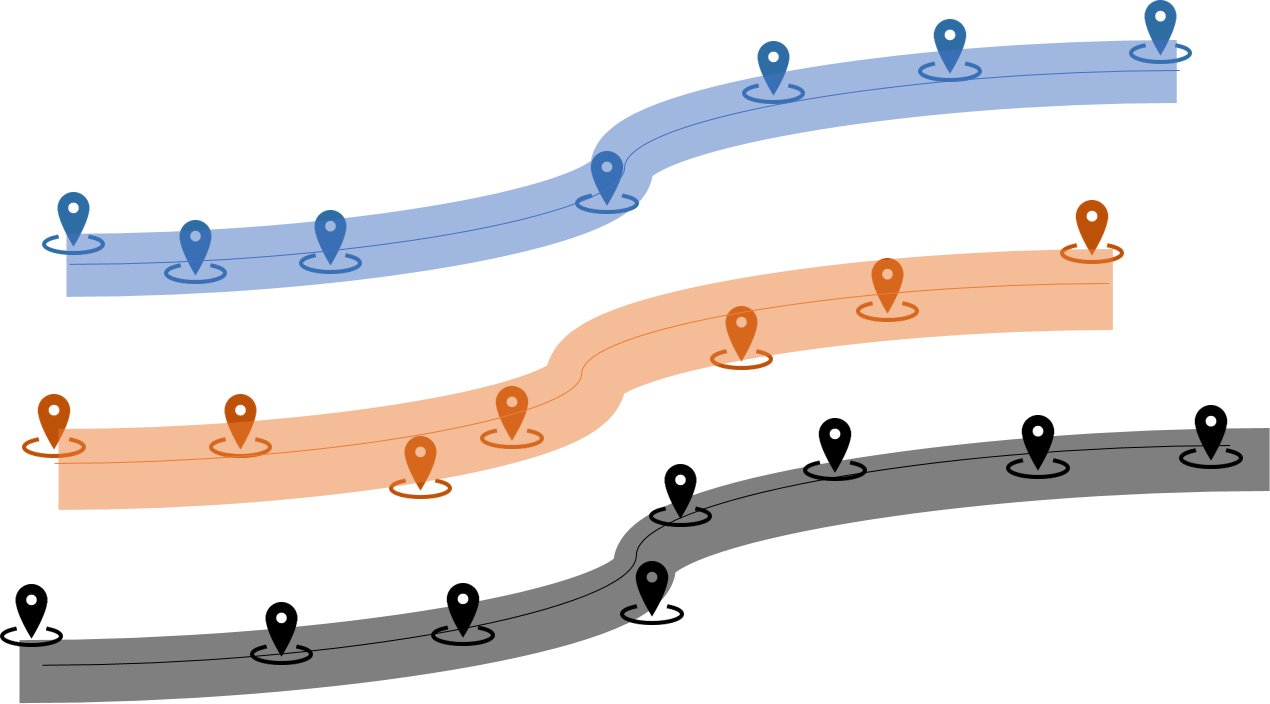}
\par\end{centering}
\caption{Gaussian process for modeling residual part of estimated positions (tiled view).}
\label{fig:gp_part}
\end{figure}

\subsection{Decision-Making using Likelihood}
\label{croseq}
The third component, decision-making, fuses confidence intervals of all locations (i.e., \ac{gnss}, Wi-Fi, and cellular) into one likelihood function, and then uses the \ac{npl} to maximize the true positive rate (of whether current \ac{gnss} position is under attack).
\subsubsection{Time Perspective}
For the data sequence, $S$, and its corresponding confidence intervals $\mathcal{I}_m(t)=\hat{\mathbf{p}}_m(t)+\mathbf{x}_m(t)$ from Algorithm \ref{alg:confiden}, we need to sequentially combine confidence intervals into a distribution of time $t$ for infrastructure $m$. The main idea is to integrate the weighted confidence intervals over $t$ with weights $K(m,t)$ and the combined distribution denoted as $Z(m,t)$. Assume that $\mathcal{I}_m(t)$ are independent random variables. Note that the time slot is from $t-w$ to $t$, and $K(m,t)$ are from a kernel function, such that the summation of $K(m,t)$ from $t-w$ to $t$ is 1. Then, the distribution $Z(m,t)$ is calculated by using the moment-generating function: 
\begin{equation}
    M_{\mathcal{I}_m(t)}(s)=\mathbb{E}[e^{s\mathcal{I}_m(t)}]
\end{equation}
Considering a weighted integral over $t$, 
\begin{equation}
    Z(m,t)=\int_{t-m}^{t}K(m,t')\mathcal{I}_m(t')\textrm{d}t'
\end{equation}
which is used in a discrete form in practice, 
\begin{equation}
    Z(m,t)=\sum_{t'=t-w}^{t}K(m,t')\mathcal{I}_m(t')
\label{eq:z_mt}
\end{equation}
and leads to its moment-generating function:
\begin{align*}
    M_{Z\left( m,t \right)}\left( s \right) &=\mathbb{E}\left[ e^{sZ\left( m,t \right)} \right]\\
    &=\mathbb{E}\left[ e^{s\sum_{t'=t-w}^t{K}\left( m,t' \right) \mathcal{I}_m(t')} \right] \\
    &=\prod_{t'=t-w}^t{\mathbb{E}}\left[ e^{sK\left( m,t' \right) \mathcal{I}_m(t')} \right] \\
    &=\prod_{t'=t-w}^t{M}_{\mathcal{I}_m(t')}\left( K\left( m,t' \right) s \right) 
\end{align*}
Recall that the moment-generating function of a Normal distribution, $\mathcal{N}(\mu ,\sigma ^{2})$, is ${\displaystyle \exp({s\mu +{\frac {1}{2}}\sigma ^{2}s^{2}}})$. Thus, 
\begin{align}
&M_{Z\left( m,t \right)}\left( s \right) \nonumber \\
&=\prod_{t'=t-w}^t{e^{-K\left( m,t' \right) \boldsymbol{\mu }^m\left( t' \right) +\left[ K\left( m,t' \right) \right] ^2 [\sigma ^m\left( t' \right)]^{\circ 2} /2}} \nonumber \\
&=e^{-\sum_{t'=t-w}^t{K\left( m,t' \right) \boldsymbol{\mu }^m\left( t' \right)}+\sum_{t'=t-w}^t{\left[ K\left( m,t' \right) \right] ^2 [\sigma ^m\left( t' \right)]^{\circ 2} /2}}.
\end{align}
$Z(m,t)$ follows a normal distribution, $\mathcal{N}(\sum_{t'=t-w}^t{K\left( m,t' \right) \boldsymbol{\mu }^m\left( t' \right)}, \sum_{t'=t-w}^t{\left[ K\left( m,t' \right) \right] ^2 \boldsymbol{\Sigma} ^m\left( t' \right)})$, which means we can compute the sequence-wise combined distribution $Z(m,t)$ by taking the weighted mean of distributions of time slots.
\subsubsection{Category Perspective}
For the combined distribution $Z(m,t)$ for all categories, the $m$th test statistics for two hypotheses are 
\revadd{
\begin{align*}
    \bullet\; \varLambda_m&\left( \mathbf{p}_0 \left( t \right) \right) |H_0
    =\frac{1}{\sigma\left(Z\left(m,t\right)\right)}\hfill\\
    &\times\varphi\left(\frac{\sum_{t'=t-w}^{t}K(m,t')\mathbf{p}_{0}(t')-\mu\left(Z\left(m,t\right)\right)}{\sigma\left(Z\left(m,t\right)\right)}\right)\\
    \bullet\; \varLambda_m&\left( \mathbf{p}_0 \left( t \right) \right) |H_1
    =\frac{1}{\sigma\left(Z\left(m,t\right)\right)}\hfill\\
    &\times\varphi\left(\frac{\sum_{t'=t-w}^{t}K(m,t')\mathbf{p}_{0}(t')-\mu\left(Z\left(m,t\right)\right)-\boldsymbol{\delta}_\text{d}}{\sigma\left(Z\left(m,t\right)\right)}\right)
\end{align*}}where $\varphi\left(\cdot\right)$ is the standard normal distribution and $\boldsymbol{\delta}_\text{d}$ is a non-centrality parameter because of \ac{gnss} spoofing. For $M$ kinds of locations, the fused test statistics is 
\begin{equation}
    \varLambda_{1:M}\left( \mathbf{p}_0 \left( t \right) \right)=\prod_{m=0}^M \varLambda_m\left( \mathbf{p}_0 \left( t \right) \right) |H_0 \;.
\label{eq:fused_test}
\end{equation}
We also define the likelihood of \ac{gnss} being under attack as $f_0\left( \mathbf{p}_0 \left( t \right) \right) \triangleq \varLambda_{1:M}\left( \mathbf{p}_0 \left( t \right) \right)$. To simplify the computation, we observe that $f_0\left( \mathbf{p}_0 \left( t \right) \right)$ is also a Gaussian function, $S\, \mathcal N (\mu,\sigma^2)$, where
\begin{equation}
    \sigma = \left(\sum_{i=0}^M {\sigma\left(Z\left(m,t\right)\right)}^{-2} \right)^{-\frac{1}{2}}
\label{eq:varcom}
\end{equation}
\begin{equation}
    \mu = \sigma^2 \sum_{i=0}^M {\sigma\left(Z\left(m,t\right)\right)}^{-2} \mu\left(Z\left(m,t\right)\right)
\label{eq:poscom}
\end{equation}
\begin{align}
    S =& \frac{(2\pi)^{-\frac{M}{2}} \sigma e^{({\mu^2}/{\sigma^{2}} - \sum_{i=0}^M  {\mu\left(Z\left(m,t\right)\right)^2}/{{\sigma\left(Z\left(m,t\right)\right)}^{2}} )/2}}{\prod_{m=0}^M\sigma\left(Z\left(m,t\right)\right)}
\end{align}
(the proof is omitted here due to space limitations). To reduce the computation and avoid the joint probability tending to 0 when the dataset is large, we take a logarithm. Then, in the next part of the proposed scheme, the \ac{npl}-based algorithm will look for a threshold for decision-making.
\subsubsection{\ac{npl} for Decision-Making} This is shown as Algorithm \ref{alg:decision}. We denote the tolerable upper bound of the Type I error (false positive) probability as $P_{\text{FP}_\text{max}}$ and the detection threshold for $\varLambda_{1:M}(\mathbf{p}_0(t))$ as $\gamma$. Then, the optimization problem for \ac{gnss} spoofing detection is phrased as 
$$
\begin{matrix}
	\underset{\gamma}{\max}&		P\left( \log \varLambda_{1:M} \left( \mathbf{p}_0 \left( t \right) \right) \le \gamma \ |\ H_1 \right)\\
	\text{s.t}.&		P\left( \log \varLambda_{1:M} \left( \mathbf{p}_0 \left( t \right) \right) \le \gamma \ |\ H_0 \right) \le P_{\text{FP}_{\max}}\\
\end{matrix}.
$$
Through the optimization problem, we can find a proper threshold $\gamma$; when $\varLambda_{1:M} \left( \mathbf{p}_0 \left( t \right) \right) \le \gamma$ the decision is to raise alarm that \ac{gnss} is spoofed.  
\begin{algorithm}
\hspace*{\algorithmicindent} \textbf{Input} \textit{CI}\\
\hspace*{\algorithmicindent} \textbf{Parameter} $\gamma$\\
\hspace*{\algorithmicindent} \textbf{Output} \textit{IsAttack}
\begin{algorithmic}[1]
\State $Z(m,t) \gets$ Eq.~\eqref{eq:z_mt} \Comment{ Temporal perspective}
\State $\varLambda_{1:M} \left( \mathbf{p}_0 \left( t \right) \right) \gets$ Eq.~\eqref{eq:fused_test} \Comment{Categorical perspective}
\If{$\log \varLambda_{1:M} \left( \mathbf{p}_0 \left( t \right) \right) \le \gamma$} 
    \State \textit{IsAttack} $\gets$ \textit{True}
\Else
    \State \textit{IsAttack} $\gets$ \textit{False}
\EndIf 
\end{algorithmic}
\caption{Decision based on series from multiple opportunistic information sources \label{alg:decision}}
\end{algorithm}

\section{Numerical Results}
\label{numres}
We consider four methods as baselines to compare with our approach. We choose three metrics, true positive probability, $P_\text{TP}$, detection time delay, $\Delta T$, and the \ac{mae} of alternative (non-\ac{gnss}) position, $\mu$.

We start with three datasets, Datasets A and B in \cite{Baidu2019,JeoChoShiRoh:J19} respectively contain 6 real-world \ac{gnss} traces and were collected in cities. Dataset C \cite{OliSciIbrDip:J22} has 10 different traces in an urban area too, slightly different from A and B. The format is summarized in Table~\ref{tab:dataset}. The lengths of the traces range from about ten to tens of kilometers. The vehicle speed ranges from 0 km/h to 90 km/h. The positioning error of benign \ac{gnss}-provided positions is Gaussian with zero mean and variance of 0.9 meters. 

Cellular and Wi-Fi location data is generated from a simulator\footnote{The simulation parameters for the free-space path loss model are from \ac{lte} TR36.814 and 802.11n 2.4 GHz.}, which generates received signal strength from 4 \acpl{bs} or \acpl{ap}, respectively. \revadd{In the absence of \ac{ap} and \ac{bs} location information, those are randomly generated with an approximate 50 meters distance to the \ac{gnss} receiver trace. The \ac{ap}-based or \ac{bs}-based positioning algorithm is \ac{wcl} \cite{OliSciIbrDip:J22} and the resultant positions are observed with noise variance 33 or 9 in meters as per \cite{KuuFalKatDia:J18}.} The unavailability probability $U_m$ for networks is 0.05 for all $m=1,2,...,M$ and it is binomially distributed. 

The \ac{gnss} location data when under attack are generated based on \cite{SheWonCheChe:C20}, updated once per second (1 Hz). The attack strategy \revadd{makes} the lateral deviation (distance of the vehicle sideways shift on a road) between the position of the \ac{gnss} victim accepted and the real position as large as possible. The attack consists of two stages: (i) vulnerability profiling: the attacker performs a constant spoofing to make a small constant deviation from the real position, and (ii) aggressive spoofing: after the victim accepted the spoofed position, the attacker makes the deviation grow exponentially. 

\revadd{By incorporating the synthesized network infrastructure locations, the resultant device position estimates, and then the attack-induced deviations into the original three datasets, we get augmented Datasets A, B, and C for our simulation experiments.}

\begin{table}
\centering
\caption{Format of \ac{gnss} mobile platform datasets.}
\begin{tabular}{lrrrr}
\hline
\hline
       &      Sample Rate   &   Dataset A   &   Dataset B   &   Dataset C \\
\hline
Timestamp     & 200Hz & \checkmark & \checkmark & \checkmark \\
Actual Location & 1Hz & \checkmark & \checkmark & \checkmark \\
GNSS Location & 1Hz & \checkmark & \checkmark & \checkmark \\
On-board Sensors & 200Hz & \checkmark & \checkmark & \text{\sffamily X} \\
\hline
\hline
\end{tabular}
\label{tab:dataset}
\end{table}

The used machine is HP-EliteDesk-800-G2-TWR with 32 GB memory and 3.40 GHz CPU. The operating system is Ubuntu 20.04.3 LTS 64-bit and the programming environment is Python 3.8.10 64-bit. Then, in our algorithm, the Gaussian process implementation uses the Python library runlmc, and the convex optimization part is based on the Python library CVXPY 1.1. 

\subsection{Baseline Methods}
\subsubsection{Signals of Opportunity}
\cite{OliSciIbrDip:J22} uses the broadcast signals from the network \acpl{bs} to validate \ac{gps}-provided position. It assumes that \ac{bs} positions are available and uses the received signal strength, $RSS_i(t)$, to estimate the distance between the mobile platform and the \ac{bs}; $i$ is the \ac{bs} index and $t$ is the time of the received signal. Based on the signal strength, we compute weights $\mathbf{w}=[w_1,w_2,...,w_N]$ and the estimated mobile platform position as the weighted centroid $Y_{est}=\frac{\mathbf{w} \cdot \mathbf{p}_{bs}}{|\mathbf{w}|}$, where $\mathbf{p}_{bs}$ is concatenated positions of all \acpl{bs}. If the distance of $Y_{est}$ and the \ac{gps}-provided position is higher than a threshold, the protocol raises an alarm (spoofing). 
\subsubsection{Kalman Filter}
The extended Kalman filter (EKF) fuses \ac{imu} and \ac{gnss} measurements. We estimate the position of the mobile platform; the state of the system refers to the motion of the mobile platform. The state estimation problem is expressed as 
\begin{equation}
    \begin{array}{*{20}{l}}
    {\mathbf{p}(t)=f(\mathbf{p}(t),\mathbf{u}(t),\mathbf{w}(t))}\\
    {\mathbf{y}(t)=g(\mathbf{p}(t),\mathbf{n}(t))}
    \end{array}
\end{equation}
where $\mathbf{p}(t)$ is location, $f$ is the motion equation, $\mathbf{u}$ is input, $\mathbf{w}$ is input noise, $g$ is observation equation, and $\mathbf{n}$ is observation noise. As the distribution does not remain Gaussian after the nonlinear transformation, $f$ and the noise are approximated as Gaussian. The Kalman filter minimizes the error of observation and motion, so we can recursively get the mean and covariance matrix of position $\mathbf{p}(t)$ \cite{CecForLauTom:J21}. 
\subsubsection{Combined Metrics}
In \cite{RotCheLoWal:J21}, multiple detection metrics, such as the received power, autocorrelation distortion, pseudoranges, carrier phase differences, and direction of arrival, are used. These $M$ metrics are considered statistically independent and a likelihood ratio function is used to combine them:
\begin{equation}
    \log \varLambda _{1:M} = \sum_{m=1}^M \log \varLambda _m.
\end{equation}
Then, the false negative probability is minimized by following the \ac{npl} paradigm.
\subsubsection{Particle Filter}
It is based on the Markov Monte Carlo method, with its central concept centered around sample generation through a stochastic process. It does not assume the location follows a Gaussian distribution. The implementation has four steps: Step 1 uniformly generates $L$ particles, $\mathbf{p}^l_m(t),l=1,2,...,L$, of locations around the initial location. Step 2 calculates the error, $e^l_m(t)$, between particles and the localization module data and uses the error to update $w^l_m(t)$. Then, the estimated position is $\hat{\mathbf{p}}_m(t)=\sum_{l=1}^L w^l_m(t) \mathbf{p}^l_m(t) / \sum_{l=1}^L w^l_m(t)$. Step 3 resampling avoids particle degenerating and removes particles with weights less than $1/\sum_{l=1}^L (w^l_m(t))^2$. Step 4 involves a recursive update, updating $t=t+1$, and then returning to Step 2. 
\subsection{Detection Error Probability}


\begin{figure*}
    \centering
    \begin{subfigure}[b]{0.24\textwidth}
    \begin{tikzpicture}[scale=.5,font=\Large]
    \begin{axis}[
        xlabel={Attack deviation [m]},
        ylabel={True positive rate [\%]},
        xmin=1, xmax=10,
        ymin=0, ymax=100,
        xtick={1,2,3,4,5,6,7,8,9,10},
        legend cell align={left},
        legend pos=south east,
        legend columns=1, 
        xmajorgrids=true,
        ymajorgrids=true,
        grid style=dashed,
    ]
    \addplot[
        color=mycolor1,
        mark=square,
        ]
        coordinates {
        (1,60)(2,72)(3,77.5)(4,83)(5,86)(6,87.5)(7,90.8)(8,92.8)(9,96.2)(10,96.5)
        };
        \addlegendentry{$P_{\text{FP}_{\max}}=0.05$}
     \addplot[
        color=mycolor2,
        mark=x,
        ]
        coordinates {
        (1,63)(2,75.5)(3,81)(4,87)(5,90)(6,91.5)(7,93.5)(8,93.4)(9,96.5)(10,96.5)
        };
        \addlegendentry{$P_{\text{FP}_{\max}}=0.1$}
     \addplot[
        color=mycolor3,
        mark=triangle,
        ]
        coordinates {
        (1,67)(2,79)(3,85)(4,90.5)(5,93.5)(6,94.7)(7,96.3)(8,96.5)(9,96.8)(10,96.8)
        };
        \addlegendentry{$P_{\text{FP}_{\max}}=0.15$}

    \addplot[color=mycolor1,mark=square,dashed]
        coordinates {(1,39)(2,46)(3,52)(4,56)(5,64)(6,73)(7,77)(8,89)(9,93)(10,93)};
    \addplot[color=mycolor2,mark=x,dashed]
        coordinates {(1,52.5)(2,60)(3,63.5)(4,70)(5,82)(6,88)(7,90)(8,96)(9,96)(10,96)};
    \addplot[color=mycolor3,mark=triangle,dashed]
        coordinates {(1,61)(2,70)(3,73.5)(4,80)(5,89)(6,96)(7,97.5)(8,99.5)(9,99.5)(10,99.5)};
    \end{axis}
    \end{tikzpicture}
    \caption{PDS with networks}
    \end{subfigure}
    \begin{subfigure}[b]{0.24\textwidth}
    \begin{tikzpicture}[scale=.5,font=\Large]
    \begin{axis}[
        xlabel={Attack deviation [m]},
        ylabel={True positive rate [\%]},
        xmin=1, xmax=10,
        ymin=0, ymax=100,
        xtick={1,2,3,4,5,6,7,8,9,10},
        legend cell align={left},
        legend pos=south east,
        legend columns=1, 
        xmajorgrids=true,
        ymajorgrids=true,
        grid style=dashed,
    ]
    \addplot[
        color=mycolor1,
        mark=square,
        ]
        coordinates {
        (1,50.5)(2,61)(3,66)(4,71.5)(5,75)(6,78)(7,81)(8,82.5)(9,87.5)(10,90)
        };
        \addlegendentry{$P_{\text{FP}_{\max}}=0.05$}
    \addplot[
        color=mycolor2,
        mark=x,
        ]
        coordinates {
        (1,57.5)(2,67.5)(3,72.5)(4,77.5)(5,81)(6,83.5)(7,85)(8,87)(9,92)(10,93)
        };
        \addlegendentry{$P_{\text{FP}_{\max}}=0.1$}
    \addplot[
        color=mycolor3,
        mark=triangle,
        ]
        coordinates {
        (1,62.2)(2,72)(3,77.5)(4,82)(5,85)(6,87)(7,87)(8,89)(9,94)(10,95)
        };
        \addlegendentry{$P_{\text{FP}_{\max}}=0.15$}
    
    \addplot[color=mycolor1,mark=square,dashed]
        coordinates {(1,38)(2,45)(3,50)(4,55)(5,66)(6,71)(7,73)(8,78)(9,83)(10,88)};
    \addplot[color=mycolor2,mark=x,dashed]
        coordinates {(1,47.5)(2,58)(3,60)(4,65)(5,72)(6,78)(7,78)(8,84)(9,87)(10,93)};
    \addplot[color=mycolor3,mark=triangle,dashed]
        coordinates {(1,53)(2,63)(3,65)(4,68)(5,77)(6,80)(7,80)(8,87)(9,90)(10,95)};
    \end{axis}
    \end{tikzpicture}
    \caption{Signals of opportunity}
    \end{subfigure}
    \begin{subfigure}[b]{0.24\textwidth}
    \begin{tikzpicture}[scale=.5,font=\Large]
    \begin{axis}[
        xlabel={Attack deviation [m]},
        ylabel={True positive rate [\%]},
        xmin=1, xmax=10,
        ymin=0, ymax=100,
        xtick={1,2,3,4,5,6,7,8,9,10},
        legend cell align={left},
        legend pos=south east,
        legend columns=1, 
        xmajorgrids=true,
        ymajorgrids=true,
        grid style=dashed,
    ]
    \addplot[color=mycolor6,mark=square]
        coordinates {(1,62)(2,74)(3,79.5)(4,85)(5,87)(6,89)(7,91.5)(8,91.8)(9,92.5)(10,93)};
        \addlegendentry{Proposed with Wi-Fi}
    \addplot[color=mycolor4,mark=o]
        coordinates {(1,57)(2,69)(3,73.5)(4,79)(5,82.5)(6,86)(7,86.5)(8,86.7)(9,87)(10,87.5)};
        \addlegendentry{Signals of opportunity}
    \end{axis}
    \end{tikzpicture}
    \caption{Only Wi-Fi, $P_{\text{FP}_{\max}}=0.1$}
    \end{subfigure}
    \begin{subfigure}[b]{0.24\textwidth}
    \begin{tikzpicture}[scale=.5,font=\Large]
    \begin{axis}[
        xlabel={Attack deviation [m]},
        ylabel={True positive rate [\%]},
        xmin=1, xmax=10,
        ymin=0, ymax=100,
        xtick={1,2,3,4,5,6,7,8,9,10},
        legend cell align={left},
        legend pos=south east,
        legend columns=1, 
        xmajorgrids=true,
        ymajorgrids=true,
        grid style=dashed,
    ]
    \addplot[color=mycolor6,mark=square]
        coordinates {(1,60)(2,74)(3,80)(4,85.5)(5,89)(6,91)(7,92)(8,93)(9,95)(10,95.5)};
        \addlegendentry{Proposed with cellular}
    \addplot[color=mycolor4,mark=o]
        coordinates {(1,57)(2,68)(3,73)(4,77)(5,81)(6,83)(7,85)(8,86)(9,89)(10,90)};
        \addlegendentry{Signals of opportunity}
    \end{axis}
    \end{tikzpicture}
    \caption{Only cellular, $P_{\text{FP}_{\max}}=0.1$}
    \end{subfigure}
    \caption{(Case 1) $P_\text{TP}$ of PDS using network-based positioning and signals of opportunity method \cite{OliSciIbrDip:J22} as a function of the spoofing deviation. Solid lines are for \revadd{augmented} Datasets A and B, and dashed lines are for augmented Dataset C.}
    \label{fig:tpdevnet}
\end{figure*}

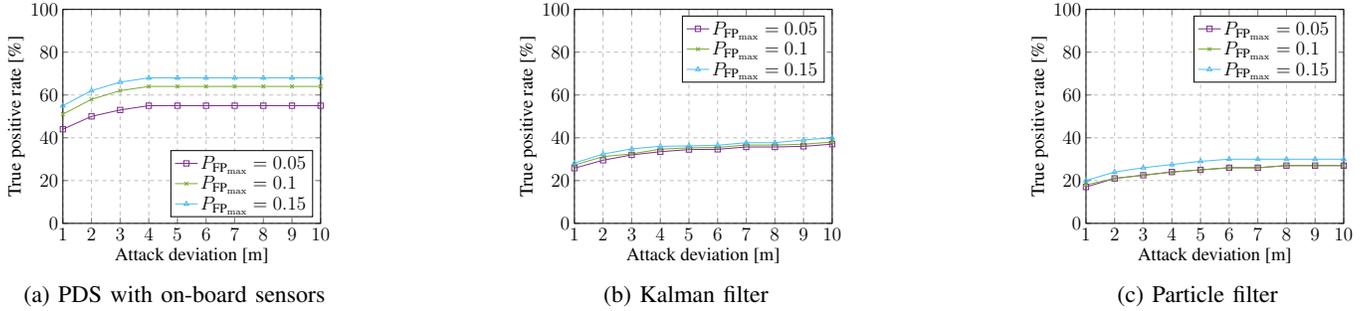
\begin{figure*}
    \centering
    \begin{subfigure}[b]{0.25\textwidth}
    \begin{tikzpicture}[scale=.5,font=\Large]
    \begin{axis}[
        xlabel={Attack deviation [m]},
        ylabel={True positive rate [\%]},
        xmin=1, xmax=10,
        ymin=0, ymax=100,
        xtick={1,2,3,4,5,6,7,8,9,10},
        legend cell align={left},
        legend pos=south east,
        legend columns=1, 
        xmajorgrids=true,
        ymajorgrids=true,
        grid style=dashed,
    ]
    \addplot[
        color=mycolor1,
        mark=square,
        ]
        coordinates {
        (1,44)(2,50)(3,53)(4,55)(5,55)(6,55)(7,55)(8,55)(9,55)(10,55)
        };
        \addlegendentry{$P_{\text{FP}_{\max}}=0.05$}
     \addplot[
        color=mycolor2,
        mark=x,
        ]
        coordinates {
        (1,51)(2,58)(3,62)(4,64)(5,64)(6,64)(7,64)(8,64)(9,64)(10,64)
        };
        \addlegendentry{$P_{\text{FP}_{\max}}=0.1$}
     \addplot[
        color=mycolor3,
        mark=triangle,
        ]
        coordinates {
        (1,55)(2,62)(3,66)(4,68)(5,68)(6,68)(7,68)(8,68)(9,68)(10,68)
        };
        \addlegendentry{$P_{\text{FP}_{\max}}=0.15$}
    \end{axis}
    \end{tikzpicture}
    \caption{PDS with on-board sensors}
    \end{subfigure}
    \hfill
    \begin{subfigure}[b]{0.25\textwidth}
    \begin{tikzpicture}[scale=.5,font=\Large]
    \begin{axis}[
        xlabel={Attack deviation [m]},
        ylabel={True positive rate [\%]},
        xmin=1, xmax=10,
        ymin=0, ymax=100,
        xtick={1,2,3,4,5,6,7,8,9,10},
        legend cell align={left},
        legend pos=north east,
        legend columns=1, 
        xmajorgrids=true,
        ymajorgrids=true,
        grid style=dashed,
    ]
    \addplot[
        color=mycolor1,
        mark=square,
        ]
        coordinates {
        (1,25.8)(2,29.5)(3,32)(4,33.5)(5,34.5)(6,34.6)(7,35.7)(8,35.7)(9,36)(10,37)
        };
        \addlegendentry{$P_{\text{FP}_{\max}}=0.05$}
     \addplot[
        color=mycolor2,
        mark=x,
        ]
        coordinates {
        (1,27.2)(2,31.2)(3,32.5)(4,34.7)(5,35.3)(6,35.5)(7,36.6)(8,36.6)(9,37)(10,38)
        };
        \addlegendentry{$P_{\text{FP}_{\max}}=0.1$}
     \addplot[
        color=mycolor3,
        mark=triangle,
        ]
        coordinates {
        (1,28.2)(2,32.4)(3,34.8)(4,36)(5,36.2)(6,36.5)(7,37.7)(8,37.7)(9,39)(10,40)
        };
        \addlegendentry{$P_{\text{FP}_{\max}}=0.15$}
        \end{axis}
    \end{tikzpicture}
    \caption{Kalman filter}
    \end{subfigure}
    \hfill
    \begin{subfigure}[b]{0.25\textwidth}
    \begin{tikzpicture}[scale=.5,font=\Large]
    \begin{axis}[
        xlabel={Attack deviation [m]},
        ylabel={True positive rate [\%]},
        xmin=1, xmax=10,
        ymin=0, ymax=100,
        xtick={1,2,3,4,5,6,7,8,9,10},
        legend cell align={left},
        legend pos=north east,
        legend columns=1, 
        xmajorgrids=true,
        ymajorgrids=true,
        grid style=dashed,
    ]
    \addplot[
        color=mycolor1,
        mark=square,
        ]
        coordinates {
        (1,17)(2,21)(3,22.5)(4,24)(5,25)(6,26)(7,26)(8,27)(9,27)(10,27)
        };
        \addlegendentry{$P_{\text{FP}_{\max}}=0.05$}
     \addplot[
        color=mycolor2,
        mark=x,
        ]
        coordinates {
        (1,18)(2,21)(3,22.5)(4,24)(5,25)(6,26)(7,26)(8,27)(9,27)(10,27)
        };
        \addlegendentry{$P_{\text{FP}_{\max}}=0.1$}
     \addplot[
        color=mycolor3,
        mark=triangle,
        ]
        coordinates {
        (1,20)(2,24)(3,26)(4,27.5)(5,29)(6,30)(7,30)(8,30)(9,30)(10,30)
        };
        \addlegendentry{$P_{\text{FP}_{\max}}=0.15$}
    \end{axis}
    \end{tikzpicture}
    \caption{Particle filter}
    \end{subfigure}
    \caption{(Case 2) $P_\text{TP}$ of PDS only using on-board sensors, Kalman filter, and particle filter as a function of the different spoofing deviation for augmented Datasets A and B.}
    \label{fig:tpdevmot}
\end{figure*}

\begin{figure}
    \centering
    \begin{subfigure}[b]{0.24\textwidth}
    \begin{tikzpicture}[scale=.5,font=\Large]
    \begin{axis}[
        xlabel={Attack deviation [m]},
        ylabel={True positive rate [\%]},
        xmin=1, xmax=10,
        ymin=0, ymax=100,
        xtick={1,2,3,4,5,6,7,8,9,10},
        legend cell align={left},
        legend pos=south east,
        legend columns=1, 
        xmajorgrids=true,
        ymajorgrids=true,
        grid style=dashed,
    ]
    \addplot[
        color=mycolor1,
        mark=square,
        ]
        coordinates {
        (1,60.3)(2,71.6)(3,77.7)(4,83.1)(5,86.4)(6,87.9)(7,90.7)(8,93.4)(9,95.8)(10,96.9)
        };
        \addlegendentry{$P_{\text{FP}_{\max}}=0.05$}
     \addplot[
        color=mycolor2,
        mark=x,
        ]
        coordinates {
        (1,63)(2,75.5)(3,81)(4,87)(5,90)(6,91.5)(7,93.5)(8,93.7)(9,97)(10,97)
        };
        \addlegendentry{$P_{\text{FP}_{\max}}=0.1$}
     \addplot[
        color=mycolor3,
        mark=triangle,
        ]
        coordinates {
        (1,67)(2,79)(3,85)(4,90.5)(5,93.8)(6,95)(7,96.5)(8,97)(9,97.2)(10,97.8)
        };
        \addlegendentry{$P_{\text{FP}_{\max}}=0.15$}
    \end{axis}
    \end{tikzpicture}
    \caption{PDS with all sources}
    \end{subfigure}
    \begin{subfigure}[b]{0.24\textwidth}
    \begin{tikzpicture}[scale=.5,font=\Large]
    \begin{axis}[
        xlabel={Attack deviation [m]},
        ylabel={True positive rate [\%]},
        xmin=1, xmax=10,
        ymin=0, ymax=100,
        xtick={1,2,3,4,5,6,7,8,9,10},
        legend cell align={left},
        legend pos=south east,
        legend columns=1, 
        xmajorgrids=true,
        ymajorgrids=true,
        grid style=dashed,
    ]
    \addplot[
        color=mycolor1,
        mark=square,
        ]
        coordinates {
        (1,45)(2,55)(3,60)(4,65)(5,69)(6,70)(7,73)(8,74.5)(9,78)(10,80)
        };
        \addlegendentry{$P_{\text{FP}_{\max}}=0.05$}
     \addplot[
        color=mycolor2,
        mark=x,
        ]
        coordinates {
        (1,52)(2,60)(3,65)(4,69.5)(5,73)(6,76)(7,78)(8,80.5)(9,85)(10,90)
        };
        \addlegendentry{$P_{\text{FP}_{\max}}=0.1$}
     \addplot[
        color=mycolor3,
        mark=triangle,
        ]
        coordinates {
        (1,57)(2,64)(3,68)(4,72.5)(5,75)(6,78)(7,81.5)(8,83)(9,86)(10,92)
        };
        \addlegendentry{$P_{\text{FP}_{\max}}=0.15$}
    \end{axis}
    \end{tikzpicture}
    \caption{Combined metrics}
    \end{subfigure}
    \caption{(Case 3) $P_\text{TP}$ of PDS using all sources of information and the combined metrics algorithm, as a function of the different spoofing deviation for augmented Datasets A and B.}
    \label{fig:tpdevall}
\end{figure}
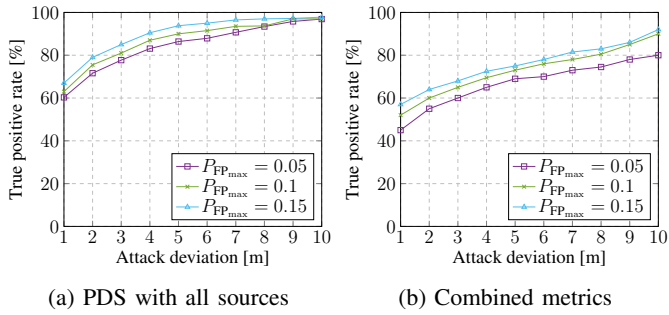

To fairly compare PDS to the baseline methods, we consider detection error probability \revadd{of (1)} network interfaces only, (2) on-board sensors only, and (3) all sources. Then, we investigate the true positive when fixing the false positive rate at 5,10, and 15\%. \revadd{Using the fixed false positive rate and historical data, the threshold $\gamma$ is selected from 0.25 to 10.} The sliding window size is set to $w=20$ seconds. We choose the \ac{rbf} kernel $K(r)=\exp(-r^2)$ as the kernel function. 

For case 1, the baseline network-based scheme \cite{OliSciIbrDip:J22} and PDS have the same network connections. Fig.~\ref{fig:tpdevnet} shows the true positive rate versus attack-induced deviation (from 1 to 10 meters). PDS here improves modestly, around 7\%, over \cite{OliSciIbrDip:J22}; it can reach 93--97\% true positive rates when the attack-induced deviation distance is 5--10 meters, vs. 85--93\% (network-based algorithm). Both methods can detect most attacks and thus resist dedicated designed spoofing as per \cite{SheWonCheChe:C20}. 

For case 2, the Kalman and particle filter-based approaches use the same input data as PDS, i.e., speed, acceleration, rotation, and \ac{gnss} position. Fig.~\ref{fig:tpdevmot} shows the true positive rate versus attack-induced deviation for a given false alarm rate. The Kalman and the particle filter have similar performance. PDS has 15--30\% true positive gain. However, overall, information from on-board sensors in the dataset does not resist dedicated designed spoofing \cite{SheWonCheChe:C20}, so all three methods can not achieve high performance (at most 70\%).

For case 3, the baseline combined metrics scheme and PDS use both on-board sensors and network interfaces. Fig.~\ref{fig:tpdevall} shows the true positive rate versus deviation from 1 to 10 meters. Due to the absence of motion uncertainty modeling of location data in the algorithm, the combined metrics scheme is not good at fusing heterogeneous data (i.e., location, speed, and acceleration) in our case, thus at most 20\% performance gain.

Considering cases 1 and 3, PDS has around 1.5\% performance gain after adding \ac{imu}. \ac{imu} is especially helpful if the growth rate of deviation is large. 

\subsection{Detection Time Delay}
\begin{figure*}
    \centering
    \begin{subfigure}[b]{0.24\textwidth}
    \begin{tikzpicture}[scale=.5,font=\Large]
    \begin{axis}[
        xlabel={Attack deviation [m]},
        ylabel={Detection time delay [s]},
        xmin=1, xmax=10,
        ymin=0, ymax=15,
        xtick={1,2,3,4,5,6,7,8,9,10},
        ytick={0,3,6,9,12,15},
        legend cell align={left},
        legend pos=north east,
        legend columns=1, 
        xmajorgrids=true,
        ymajorgrids=true,
        grid style=dashed,
    ]
    \addplot[
        color=mycolor1,
        mark=square,
        ]
        coordinates {
        (1,8)(2,3.7)(3,2.4)(4,0.8)(5,0.65)(6,0.5)(7,0)(8,0)(9,0)(10,0)
        };
        \addlegendentry{$P_{\text{FP}_{\max}}=0.05$}
     \addplot[
        color=mycolor2,
        mark=x,
        ]
        coordinates {
        (1,7.6)(2,3.8)(3,2.3)(4,0.8)(5,0.65)(6,0.35)(7,0)(8,0)(9,0)(10,0)
        };
        \addlegendentry{$P_{\text{FP}_{\max}}=0.1$}
     \addplot[
        color=mycolor3,
        mark=triangle,
        ]
        coordinates {
        (1,6)(2,3)(3,1.9)(4,0.75)(5,0.58)(6,0.35)(7,0)(8,0)(9,0)(10,0)
        };
        \addlegendentry{$P_{\text{FP}_{\max}}=0.15$}
        
    \addplot[color=mycolor1,mark=square,dashed]
        coordinates {(1,10)(2,8)(3,6.5)(4,5.5)(5,4.5)(6,3.5)(7,1.5)(8,0.3)(9,0)(10,0)};
    \addplot[color=mycolor2,mark=x,dashed]
        coordinates {(1,8)(2,5)(3,4)(4,3.5)(5,3)(6,2)(7,1)(8,0)(9,0)(10,0)};
    \addplot[color=mycolor3,mark=triangle,dashed]
        coordinates {(1,6)(2,3)(3,2)(4,1.6)(5,1)(6,0.5)(7,0.3)(8,0)(9,0)(10,0)};
    \end{axis}
    \end{tikzpicture}
    \caption{PDS with networks}
    \end{subfigure}
    \hfill
    \begin{subfigure}[b]{0.24\textwidth}
    \begin{tikzpicture}[scale=.5,font=\Large]
    \begin{axis}[
        xlabel={Attack deviation [m]},
        ylabel={Detection time delay [s]},
        xmin=1, xmax=10,
        ymin=0, ymax=15,
        xtick={1,2,3,4,5,6,7,8,9,10},
        ytick={0,3,6,9,12,15},
        legend cell align={left},
        legend pos=north east,
        legend columns=1, 
        xmajorgrids=true,
        ymajorgrids=true,
        grid style=dashed,
    ]
    \addplot[
        color=mycolor1,
        mark=square,
        ]
        coordinates {
        (1,8)(2,4.8)(3,3.5)(4,2)(5,1.4)(6,1)(7,0.55)(8,0.65)(9,0.65)(10,0.5)
        };
        \addlegendentry{$P_{\text{FP}_{\max}}=0.05$}
     \addplot[
        color=mycolor2,
        mark=x,
        ]
        coordinates {
        (1,6)(2,3)(3,2.3)(4,1.4)(5,0.75)(6,0.5)(7,0.25)(8,0.3)(9,0.3)(10,0.18)
        };
        \addlegendentry{$P_{\text{FP}_{\max}}=0.1$}
     \addplot[
        color=mycolor3,
        mark=triangle,
        ]
        coordinates {
        (1,5)(2,2.7)(3,1.5)(4,0.9)(5,0.5)(6,0.3)(7,0)(8,0)(9,0)(10,0)
        };
        \addlegendentry{$P_{\text{FP}_{\max}}=0.15$}

    \addplot[color=mycolor1,mark=square,dashed]
        coordinates {(1,10)(2,8)(3,7)(4,6)(5,5.5)(6,4)(7,3.5)(8,1.7)(9,0.5)(10,0.3)};
    \addplot[color=mycolor2,mark=x,dashed]
        coordinates {(1,8)(2,5)(3,4)(4,3.8)(5,3)(6,2.5)(7,2)(8,0.7)(9,0.5)(10,0.2)};
    \addplot[color=mycolor3,mark=triangle,dashed]
        coordinates {(1,6)(2,3)(3,2.5)(4,2.5)(5,2.5)(6,2.2)(7,1.8)(8,0.7)(9,0.5)(10,0.2)};
    \end{axis}
    \end{tikzpicture}
    \caption{Signals of opportunity}
    \end{subfigure}
    \begin{subfigure}[b]{0.24\textwidth}
    \begin{tikzpicture}[scale=.5,font=\Large]
    \begin{axis}[
        xlabel={Attack deviation [m]},
        ylabel={Detection time delay [s]},
        xmin=1, xmax=10,
        ymin=0, ymax=15,
        xtick={1,2,3,4,5,6,7,8,9,10},
        ytick={0,3,6,9,12,15},
        legend cell align={left},
        legend pos=north east,
        legend columns=1, 
        xmajorgrids=true,
        ymajorgrids=true,
        grid style=dashed,
    ]
    \addplot[color=mycolor6,mark=square]
        coordinates {(1,7.5)(2,5)(3,4.5)(4,3.3)(5,3)(6,2.1)(7,1.3)(8,0.5)(9,0.2)(10,0.1)};
        \addlegendentry{Proposed with Wi-Fi}
    \addplot[color=mycolor4,mark=o]
        coordinates {(1,8.5)(2,5.3)(3,5)(4,3.9)(5,3.3)(6,2.6)(7,2)(8,1)(9,0.6)(10,0.4)};
        \addlegendentry{Signals of opportunity}
    \end{axis}
    \end{tikzpicture}
    \caption{Only Wi-Fi, $P_{\text{FP}_{\max}}=0.1$}
    \end{subfigure}
    \hfill
    \begin{subfigure}[b]{0.24\textwidth}
    \begin{tikzpicture}[scale=.5,font=\Large]
    \begin{axis}[
        xlabel={Attack deviation [m]},
        ylabel={Detection time delay [s]},
        xmin=1, xmax=10,
        ymin=0, ymax=15,
        xtick={1,2,3,4,5,6,7,8,9,10},
        ytick={0,3,6,9,12,15},
        legend cell align={left},
        legend pos=north east,
        legend columns=1, 
        xmajorgrids=true,
        ymajorgrids=true,
        grid style=dashed,
    ]
    \addplot[color=mycolor6,mark=square]
        coordinates {(1,7.1)(2,4.9)(3,4.2)(4,3.1)(5,2.6)(6,2.1)(7,1.2)(8,0.4)(9,0.1)(10,0.1)};
        \addlegendentry{Proposed with cellular}
    \addplot[color=mycolor4,mark=o]
        coordinates {(1,8.3)(2,5.4)(3,4.9)(4,3.8)(5,3)(6,2.6)(7,2.2)(8,1.4)(9,0.6)(10,0.3)};
        \addlegendentry{Signals of opportunity}
    \end{axis}
    \end{tikzpicture}
    \caption{Only cellular, $P_{\text{FP}_{\max}}=0.1$}
    \end{subfigure}
    \caption{(Case 1) $\Delta T$ for PDS using network-based positioning and signals of opportunity method \cite{OliSciIbrDip:J22} as a function of the spoofing deviation. Solid lines are for augmented Datasets A and B, and dashed lines are for augmented Dataset C.}
    \label{fig:latdevnet}
\end{figure*}

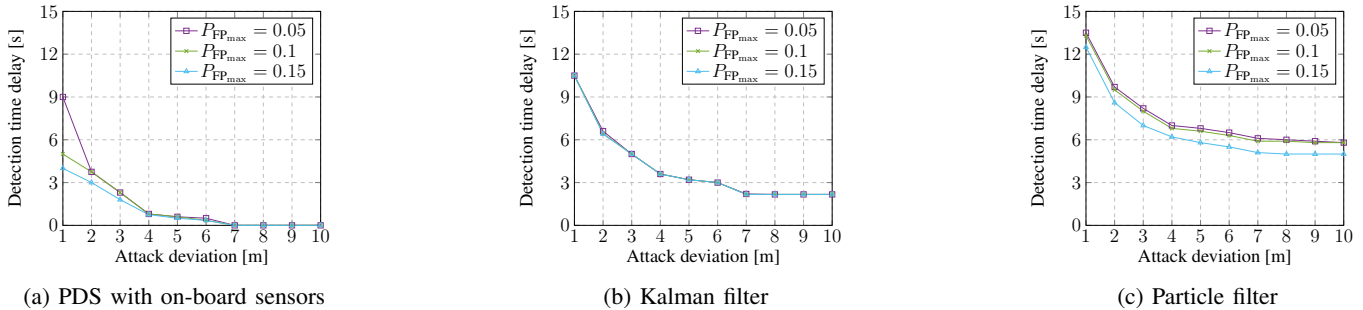
\begin{figure*}
    \centering
    \begin{subfigure}[b]{0.25\textwidth}
    \begin{tikzpicture}[scale=.5,font=\Large]
    \begin{axis}[
        xlabel={Attack deviation [m]},
        ylabel={Detection time delay [s]},
        xmin=1, xmax=10,
        ymin=0, ymax=15,
        xtick={1,2,3,4,5,6,7,8,9,10},
        ytick={0,3,6,9,12,15},
        legend cell align={left},
        legend pos=north east,
        legend columns=1, 
        xmajorgrids=true,
        ymajorgrids=true,
        grid style=dashed,
    ]
    \addplot[
        color=mycolor1,
        mark=square,
        ]
        coordinates {
        (1,9)(2,3.75)(3,2.3)(4,0.8)(5,0.6)(6,0.5)(7,0)(8,0)(9,0)(10,0)
        };
        \addlegendentry{$P_{\text{FP}_{\max}}=0.05$}
     \addplot[
        color=mycolor2,
        mark=x,
        ]
        coordinates {
        (1,5)(2,3.75)(3,2.3)(4,0.8)(5,0.6)(6,0.35)(7,0)(8,0)(9,0)(10,0)
        };
        \addlegendentry{$P_{\text{FP}_{\max}}=0.1$}
     \addplot[
        color=mycolor3,
        mark=triangle,
        ]
        coordinates {
        (1,4)(2,3)(3,1.8)(4,0.75)(5,0.5)(6,0.35)(7,0)(8,0)(9,0)(10,0)
        };
        \addlegendentry{$P_{\text{FP}_{\max}}=0.15$}
    \end{axis}
    \end{tikzpicture}
    \caption{PDS with on-board sensors}
    \end{subfigure}
    \hfill
    \begin{subfigure}[b]{0.25\textwidth}
    \begin{tikzpicture}[scale=.5,font=\Large]
    \begin{axis}[
        xlabel={Attack deviation [m]},
        ylabel={Detection time delay [s]},
        xmin=1, xmax=10,
        ymin=0, ymax=15,
        xtick={1,2,3,4,5,6,7,8,9,10},
        ytick={0,3,6,9,12,15},
        legend cell align={left},
        legend pos=north east,
        legend columns=1, 
        xmajorgrids=true,
        ymajorgrids=true,
        grid style=dashed,
    ]
    \addplot[
        color=mycolor1,
        mark=square,
        ]
        coordinates {
        (1,10.5)(2,6.6)(3,5)(4,3.6)(5,3.2)(6,3)(7,2.2)(8,2.17)(9,2.17)(10,2.17)
        };
        \addlegendentry{$P_{\text{FP}_{\max}}=0.05$}
     \addplot[
        color=mycolor2,
        mark=x,
        ]
        coordinates {
        (1,10.5)(2,6.43)(3,5)(4,3.6)(5,3.2)(6,3)(7,2.17)(8,2.17)(9,2.17)(10,2.17)
        };
        \addlegendentry{$P_{\text{FP}_{\max}}=0.1$}
     \addplot[
        color=mycolor3,
        mark=triangle,
        ]
        coordinates {
        (1,10.5)(2,6.4)(3,5)(4,3.6)(5,3.2)(6,3)(7,2.17)(8,2.17)(9,2.17)(10,2.17)
        };
        \addlegendentry{$P_{\text{FP}_{\max}}=0.15$}
    \end{axis}
    \end{tikzpicture}
    \caption{Kalman filter}
    \end{subfigure}
    \hfill
    \begin{subfigure}[b]{0.25\textwidth}
    \begin{tikzpicture}[scale=.5,font=\Large]
    \begin{axis}[
        xlabel={Attack deviation [m]},
        ylabel={Detection time delay [s]},
        xmin=1, xmax=10,
        ymin=0, ymax=15,
        xtick={1,2,3,4,5,6,7,8,9,10},
        ytick={0,3,6,9,12,15},
        legend cell align={left},
        legend pos=north east,
        legend columns=1, 
        xmajorgrids=true,
        ymajorgrids=true,
        grid style=dashed,
    ]
    \addplot[
        color=mycolor1,
        mark=square,
        ]
        coordinates {
        (1,13.5)(2,9.7)(3,8.2)(4,7)(5,6.8)(6,6.5)(7,6.1)(8,6)(9,5.9)(10,5.8)
        };
        \addlegendentry{$P_{\text{FP}_{\max}}=0.05$}
     \addplot[
        color=mycolor2,
        mark=x,
        ]
        coordinates {
        (1,13.3)(2,9.5)(3,8)(4,6.8)(5,6.6)(6,6.3)(7,5.9)(8,5.9)(9,5.8)(10,5.8)
        };
        \addlegendentry{$P_{\text{FP}_{\max}}=0.1$}
     \addplot[
        color=mycolor3,
        mark=triangle,
        ]
        coordinates {
        (1,12.5)(2,8.6)(3,7)(4,6.2)(5,5.8)(6,5.5)(7,5.1)(8,5)(9,5)(10,5)
        };
        \addlegendentry{$P_{\text{FP}_{\max}}=0.15$}
    \end{axis}
    \end{tikzpicture}
    \caption{Particle filter}
    \end{subfigure}
    \caption{(Case 2) $\Delta T$ for PDS, Kalman filter, and particle filter, as a function of the spoofing deviation for augmented Datasets A and B. Note that the lower bound is 0 because we exclude computation delay from the computation machine.}
    \label{fig:latdevmot}
\end{figure*}

\begin{figure}
    \centering
    \begin{subfigure}[b]{0.24\textwidth}
    \begin{tikzpicture}[scale=.5,font=\Large]
    \begin{axis}[
        xlabel={Attack deviation [m]},
        ylabel={Detection time delay [s]},
        xmin=1, xmax=10,
        ymin=0, ymax=15,
        xtick={1,2,3,4,5,6,7,8,9,10},
        ytick={0,3,6,9,12,15},
        legend cell align={left},
        legend pos=north east,
        legend columns=1, 
        xmajorgrids=true,
        ymajorgrids=true,
        grid style=dashed,
    ]
    \addplot[
        color=mycolor1,
        mark=square,
        ]
        coordinates {
        (1,8.5)(2,4.3)(3,2.8)(4,1)(5,0.6)(6,0.4)(7,0)(8,0)(9,0)(10,0)
        };
        \addlegendentry{$P_{\text{FP}_{\max}}=0.05$}
     \addplot[
        color=mycolor2,
        mark=x,
        ]
        coordinates {
        (1,8)(2,3.9)(3,2.2)(4,0.75)(5,0.5)(6,0.3)(7,0)(8,0)(9,0)(10,0)
        };
        \addlegendentry{$P_{\text{FP}_{\max}}=0.1$}
     \addplot[
        color=mycolor3,
        mark=triangle,
        ]
        coordinates {
        (1,6)(2,3)(3,1.8)(4,0.75)(5,0.5)(6,0.3)(7,0)(8,0)(9,0)(10,0)
        };
        \addlegendentry{$P_{\text{FP}_{\max}}=0.15$}
    \end{axis}
    \end{tikzpicture}
    \caption{PDS with all sources}
    \end{subfigure}
    \hfill
    \begin{subfigure}[b]{0.24\textwidth}
    \begin{tikzpicture}[scale=.5,font=\Large]
    \begin{axis}[
        xlabel={Attack deviation [m]},
        ylabel={Detection time delay [s]},
        xmin=1, xmax=10,
        ymin=0, ymax=15,
        xtick={1,2,3,4,5,6,7,8,9,10},
        ytick={0,3,6,9,12,15},
        legend cell align={left},
        legend pos=north east,
        legend columns=1, 
        xmajorgrids=true,
        ymajorgrids=true,
        grid style=dashed,
    ]
    \addplot[
        color=mycolor1,
        mark=square,
        ]
        coordinates {
        (1,10)(2,7)(3,4.4)(4,2.5)(5,2)(6,1.8)(7,1.5)(8,1.2)(9,1.2)(10,0.8)
        };
        \addlegendentry{$P_{\text{FP}_{\max}}=0.05$}
     \addplot[
        color=mycolor2,
        mark=x,
        ]
        coordinates {
        (1,8)(2,5.6)(3,3.8)(4,2.2)(5,1.8)(6,1.6)(7,1.25)(8,0.8)(9,0.7)(10,0.3)
        };
        \addlegendentry{$P_{\text{FP}_{\max}}=0.1$}
     \addplot[
        color=mycolor3,
        mark=triangle,
        ]
        coordinates {
        (1,6.3)(2,4)(3,3.3)(4,2)(5,1.7)(6,1.5)(7,1.2)(8,0.7)(9,0.7)(10,0.3)
        };
        \addlegendentry{$P_{\text{FP}_{\max}}=0.15$}
    \end{axis}
    \end{tikzpicture}
    \caption{Combined metrics}
    \end{subfigure}
    \caption{(Case 3) $\Delta T$ for PDS and the combined metrics algorithm, as a function of the spoofing deviation for augmented Datasets A and B.}
    \label{fig:latdevall}
\end{figure}
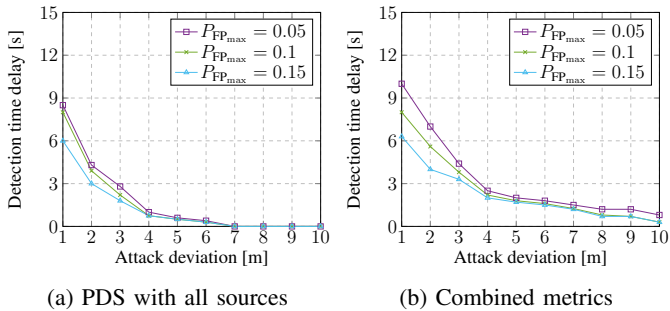
\revadd{Detection time delay is the time between the attack launch and detection.} Especially because the spoofing attacks in the \ac{gnss} traces are stealthy, gradually changing the induced deviation from the actual position, it is interesting to consider the time delay to detect the attack. We exclude computation delays and fix the false positive rate at $\{5,10,15\}$ percent. \revadd{Other experiment settings are the same as for the previous subsection.} We only measure time delay for successful attack detections.

Figs.~\ref{fig:latdevnet}--\ref{fig:latdevall} \revadd{show} the average $\Delta T$ as a function of deviation for cases 1--3 respectively. In Fig.~\ref{fig:latdevnet}, PDS has similar shapes of performance curves but always lower delay, because network locations are noisy and PDS can smoothen out the noise, thus deriving a better estimation of the actual location and being more sensitive to deviation. In Fig.~\ref{fig:latdevmot}, 
Kalman and particle filters have a delay that is always larger than 2--6 seconds, as they need time to update posterior distributions. Overall, our schemes are slightly better than the baseline ones for deviations from 1 to 5 meters. The detection delay of PDS is lower than 0.5 seconds when the deviation is larger than 5 meters. It also has at most 6 seconds performance gain compared to the other methods. Higher detection accuracy and more accurate alternative position result in lower $\Delta T$, because methods that provide better information for decision-making do help detect spoofing. In contrast, methods that do not fuse opportunistic information have a higher probability of missed detection, so they are not as fast as the proposed one. 

\subsection{Alternative Position Accuracy}
\begin{figure*}
\centering
\begin{subfigure}[b]{0.25\textwidth}
\begin{tikzpicture}[scale=.5,font=\Large]
    \begin{axis}[  
        xtick={1,2,3},
        xticklabels={Mean,Top 20\%,Bottom 20\%},
        xlabel={Metrics distribution},
    	ylabel={Absolute error [m]},
        ymin=0,ymax=125,xmin=0.5,xmax=3.5,
    	ybar=3pt,
        bar width=15pt,
        legend cell align={left},
        legend pos=north east,
        xmajorgrids=true,
        ymajorgrids=true,
        grid style=dashed,
    ]
    \addplot[fill=mycolor1]
    	coordinates {(1,12.8) (2,19.0) (3,5.3)};
    \addplot[fill=mycolor2]
    	coordinates {(1,35.3) (2,24.2) (3,12.4)};
    \legend{PDS,Signals}
    \end{axis}
    \end{tikzpicture}
    \caption{Case 1}
    \label{fig:maeposnet}
\end{subfigure}
\hfill
\begin{subfigure}[b]{0.25\textwidth}
\begin{tikzpicture}[scale=.5,font=\Large]
    \begin{axis}[
        xtick={1,2,3},
        xticklabels={Mean,Top 20\%,Bottom 20\%},
        xlabel={Metrics distribution},
    	ylabel={Absolute error [m]},
        ymin=0,ymax=125,xmin=0.5,xmax=3.5,
    	ybar=3pt,
        bar width=15pt,
        legend cell align={left},
        legend pos=north east,     
        xmajorgrids=true,
        ymajorgrids=true,
        grid style=dashed,
    ]
    \addplot[fill=mycolor1]
    	coordinates {(1,84.3) (2,36.1) (3,5.3)};
    \addplot[fill=mycolor5]
    	coordinates {(1,115) (2,72.1) (3,19.8)};
    \addplot[fill=mycolor3]
    	coordinates {(1,109) (2,69.2) (3,18.4)};
    \legend{PDS,PF,Kalman}
    \end{axis}
    \end{tikzpicture}
    \caption{Case 2}
    \label{fig:maeposmot}
\end{subfigure}
\hfill
\begin{subfigure}[b]{0.25\textwidth}
\begin{tikzpicture}[scale=.5,font=\Large]
    \begin{axis}[  
        xtick={1,2,3},
        xticklabels={Mean,Top 20\%,Bottom 20\%},
        xlabel={Metrics distribution},
    	ylabel={Absolute error [m]},
        ymin=0,ymax=125,xmin=0.5,xmax=3.5,
    	ybar=3pt,
        bar width=15pt,
        legend cell align={left},
        legend pos=north east,
        xmajorgrids=true,
        ymajorgrids=true,
        grid style=dashed,
    ]
    \addplot[fill=mycolor1]
    	coordinates {(1,12.7) (2,18.9) (3,5.1)};
    \addplot[fill=mycolor4]
    	coordinates {(1,19.75) (2,29.28) (3,8.7)};
    \legend{PDS,GLRT}
    \end{axis}
    \end{tikzpicture}
    \caption{Case 3}
    \label{fig:maeposall}
\end{subfigure}
\caption{Absolute error evaluation of alternative position accuracy over different schemes and traces.}
\end{figure*}
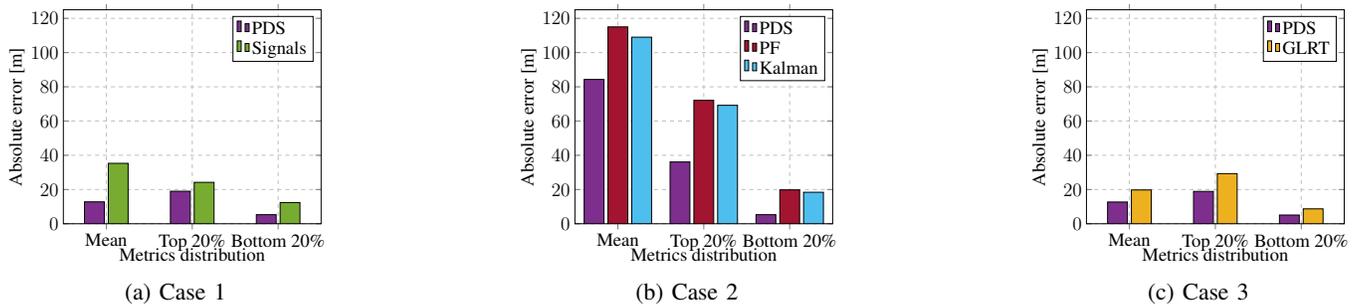
We define the alternative position as the combined mean of the confidence intervals in Eq.~\eqref{eq:poscom} independently of whether the \ac{gnss} receiver is under attack. In the dataset, the deviation of successful \ac{gnss} spoofing ranges from 0 to 1000 meters, and most of them are less than 50 meters. Here we calculate the absolute error of the true position versus the alternative position. The experimental parameters are identical to those used in the previous subsection.

For case 1, Fig.~\ref{fig:maeposnet} shows the \ac{mae} of alternative position without on-board sensors. We show the top 20\%, bottom 20\% and mean error of the distribution and plot them side by side. PDS has a much lower error for case 1, especially for the average error. For case 2, Fig.~\ref{fig:maeposmot} shows the \ac{mae} over different schemes and positions. For case 3, Fig.~\ref{fig:maeposall} shows the \ac{mae} when using all sources, including on-board sensors and network interfaces to calculate the alternative position. Overall, our alternative positions have smaller errors, which is only 14--77\% of the other methods \ac{mae}, due to combining different opportunistic information.

\subsection{Discussion}
\subsubsection{Performance}
PDS outperforms baseline methods in Sec.~\ref{numres}, because it considers contextual information and the correlation of locations. It fuses all sources and it naturally improves over the network-only or \ac{imu}-only variants. It reduces accumulated errors of on-board sensors and one-time errors of network-based positioning, so covers both long-term and short-term errors. Our PDS largely solves the gradual and strongest \ac{gnss} spoofing attacks in \cite{SheWonCheChe:C20}. 
\subsubsection{Uncertainty}
We compute mean and variance with Eq.~\eqref{eq:varcom} and \eqref{eq:poscom} to obtain confidence intervals. Then, we use mean and variance to compute the likelihood for decision-making. We conduct a comparison between the performance of PDS with uncertainty and its performance without uncertainty to determine the level of performance gain achieved. For the likelihood function, the scheme without uncertainty sets variance to a constant. We observe that PDS with uncertainty has \revadd{0.5--2\%} true positive gain over PDS without uncertainty. 
\subsubsection{Efficiency}
The computational complexity of PDS depends on the window size. In the local polynomial and Gaussian process regressions, matrix inversion costs most of the computations, and its complexity is $\mathcal{O}(w^3)$. There is a significant amount of work on accelerating matrix inversion. The choice of window size is a trade-off. If tiny, we gain efficiency at the expense of detection accuracy. If large, the algorithm will process needless historical data and be very slow. The window size could also be different in local polynomial and Gaussian process regressions. The trade-off needs to be explored in the future.

\section{Conclusion}
\label{conclu}
This paper develops a construction algorithm for confidence intervals and fuses opportunistic information to obtain the likelihood of \ac{gnss} being under attack. To decide if \ac{gnss} spoofing is underway, the core idea is to exploit the motion of the mobile platform and the statistical properties of the location data. First, we use a polynomial regression with motion constraints, proved to be convex, to estimate the location. Then, by using Gaussian process regression, we model the uncertainty of the location prediction, then fuse them into a likelihood function for probabilistic detection. The proposed detector has more than 7\% performance gain on average in true positive probability, lower detection time delay, and reduces at least 23\% error of alternative positions. 
In future work, we will assume network infrastructures that are also under attack and extend our scheme to explore more details in individual anchors.


\bibliographystyle{IEEEtran}
\bibliography{reference/references.bib}

\end{document}